\providecommand{\Nathanael}{\ifmmode\ERROR\else{Nathana\"{e}l}\xspace\fi}
\DeclareTextSymbolDefault{\textquotedbl}{T1}
\theoremstyle{plain}
\newtheorem{thm}{\protect\theoremname}
\theoremstyle{definition}
\newtheorem{defn}[thm]{\protect\definitionname}
\theoremstyle{plain}
\theoremstyle{remark}
\newtheorem{claim}[thm]{\protect\claimname}
\newenvironment{proof-sketch}[0]
	{\begin{proof}[\textbf{Proof sketch}]}
	{\end{proof}}
\theoremstyle{plain}
\newtheorem{lem}[thm]{\protect\lemmaname}
\theoremstyle{plain}
\newtheorem{innerLem}{\protect\lemmaname}
\theoremstyle{plain}
\newtheorem{conjecture}[thm]{\protect\conjecturename}
\newcommand{\MyLyxThmFN}[1]{\hspace{-6pt}\footnote{#1}}
\newcommand{\MyLyxThmNewline}[1]{~\par\nopagebreak\ignorespaces}
\setlist[itemize,enumerate]{nosep}
\providecommand{\claimname}{Claim}
\providecommand{\conjecturename}{Conjecture}
\providecommand{\corollaryname}{Corollary}
\providecommand{\definitionname}{Definition}
\providecommand{\lemmaname}{Lemma}
\providecommand{\theoremname}{Theorem}
\begin{document}

\global\long\def\Z{\mathbb{Z}}
\global\long\def\N{\mathbb{N}}
\global\long\def\C{\mathbb{C}}
\global\long\def\Q{\mathbb{Q}}

\global\long\def\GR{G}
\global\long\def\GRe{E}
\global\long\def\GRvNEW{\mathcal{V}}
\global\long\def\GRe{E}

\global\long\def\GRvVerts{V}
\global\long\def\GRzVerts{Z}

\newcommand{\ZVlineGraphTEXT}{$\GRzVerts\GRvVerts$-line graph\xspace}
\newcommand{\ZVlineGraphsTEXT}{$\GRzVerts\GRvVerts$-line graphs\xspace}
\newcommand{\ZVopartTEXT}{$\GRzVerts\GRvVerts$-ordered partition\xspace}
\newcommand{\ZVopartsTEXT}{$\GRzVerts\GRvVerts$-ordered partitions\xspace}

\global\long\def\Zvertex{\GRzVerts\text{-vertex}}
\global\long\def\Zvertices{\GRzVerts\text{-vertices}}

\global\long\def\Vvertex{\GRvVerts\text{-vertex}}
\global\long\def\Vvertices{\GRvVerts\text{-vertices}}

\global\long\def\SubGraph#1{\GRvVerts_{#1}}

\global\long\def\ViSubGraph{\SubGraph i\text{-subgraph}}
\global\long\def\ViSubGraphs{\SubGraph i\text{-subgraphs}}

\global\long\def\Clique#1{K_{#1}}
\global\long\def\Cycle#1{C_{#1}}

\global\long\def\WRTaPartitionInner#1#2#3#4{#1=#2\cup\left(#3_{1}\DisjCup\cdots\DisjCup#3_{#4}\right)}

\global\long\def\WRTaPartition{\WRTaPartitionInner{\GRvNEW}{\GRzVerts}{\GRvVerts}k}

\global\long\def\XtoTheStar#1{\bigcup_{t\geqslant0}#1^{t}}
\global\long\def\VtoTheStar{\XtoTheStar{\GRvNEW}}

\global\long\def\LOC{x}
\global\long\def\LOCof#1{\LOC_{#1}}
\global\long\def\LOCvecGen#1{\boldsymbol{#1}}
\global\long\def\LOCvec{\LOCvecGen{\LOC}}

\global\long\def\REPORTset{\boldsymbol{A}}

\global\long\def\DISTof#1#2{d\left(#1,#2\right)}

\global\long\def\NeighOf#1{N\left(#1\right)}
\global\long\def\BallOf#1#2{B\left(#1,#2\right)}

\global\long\def\NeighOfInZ#1{\NeighOf{#1}\cap\GRzVerts}
\global\long\def\BallOfCapZ#1#2{\BallOf{#1}{#2}\cap\GRzVerts}

\global\long\def\RootSymbol{\mathcal{R}}
\global\long\def\RootOfSubGraph#1{\RootSymbol\left(\SubGraph{#1}\right)}

\global\long\def\MECH{F}
\global\long\def\MECHof#1{\MECH\left(#1\right)}

\global\long\def\MECHi#1{\MECH_{#1}}
\global\long\def\MECHiOf#1#2{\MECHi{#1}\left(#2\right)}

\global\long\def\MECHz{G}
\global\long\def\MECHzOf#1{\MECHz\left(#1\right)}

\global\long\def\PO{PO}
\global\long\def\POof#1{\PO\left(#1\right)}

\global\long\def\TheMECH{\MECH^{\star}}
\global\long\def\TheMECHof#1{\TheMECH\left(#1\right)}

\global\long\def\xPrime#1{#1^{\prime}}
\global\long\def\xDPrime#1{#1^{\second}}
\global\long\def\xTPrime#1{#1^{\third}}
\global\long\def\xFPrime#1{#1^{\fourth}}

\makeatletter
\global\long\def\rmnum#1{\romannumeral#1}
\global\long\def\Rmnum#1{\uppercase\expandafter{\romannumeral#1\relax}}
\makeatother

\global\long\def\DisjCup{\mathop{\dot{\cup}} }
\global\long\def\bigDisjCup{\mathop{\dot{\bigcup}} }
\global\long\def\SetSt{\;\middle\vert\;}
\global\long\def\sizeof#1{\left|#1\right|}
\global\long\def\st{\text{ s.t. }}

\global\long\def\MyMathop#1{\mathop{\mathrm{#1}}}
\global\long\def\argmax{\MyMathop{argmax}}
\global\long\def\argmin{\MyMathop{argmin}}

\newcommand{\mathify}[1]{\ifmmode{#1}\else\mbox{\ensuremath{#1}}\fi} 
\newcommand{\texify}[1] {\ifmmode{\text{#1}}\else{#1}\fi}
\newcommand{\XXth}[2]{%
	\ifmmode {#1^{\underline{#2}}}%
	\else {#1\textsuperscript{\underline{#2}}\xspace}\fi%
}
\newcommand{\Nst}[1]{\mathify{\XXth{#1}{st}}\xspace}
\newcommand{\Nnd}[1]{\mathify{\XXth{#1}{nd}}\xspace}
\newcommand{\Nrd}[1]{\mathify{\XXth{#1}{rd}}\xspace}
\newcommand{\Nth}[1]{%
	\ifcase#1\relax%
	0
	\or\Nst{1}%
	\or\Nnd{2}%
	\or\Nrd{3}%
	\else\mathify{\Xth{#1}}%
	\fi\xspace}
\global\long\def\Xth#1{\XXth{#1}{th}\xspace}
\global\long\def\ith{\Xth i\xspace}
\global\long\def\jth{\Xth j\xspace}

\global\long\def\EDGE#1{\ar@{-}[#1]}
\global\long\def\EDGEwithDist#1#2{\ar_{#2}@{-}[#1]}
\global\long\def\longEDGE#1#2{\ar_{#2}@{~}[#1]}
\global\long\def\EDGEcurved#1#2{\ar@{-}@/^{#2}/[#1]}
\global\long\def\EDGEdot#1{\ar@{.}[#1]}
\global\long\def\VERTEX{\boldsymbol{\blacklozenge}}
\global\long\def\Zcirc{{\color{purple}\boldsymbol{\bigcirc}}}
\global\long\def\VERTEXwithName#1{*++[o][F]{#1}}

\newcommand{\GeneralExampleVRZ}[5]{
	\begin{array}[#1]{@{}c@{}}%
		\resizebox{#3}{#4}{\xymatrix#2{#5}}%
\end{array}}%

\global\long\def\ThePropSymbol{{\color{cyan!80!black}\left(\bigstar\right)}}\newcommand{\ExampleBiClique}[1]{\ExampleBiCliqueInner{\VERTEX}{c}{}{#1}{!}}%
\newcommand{\ExampleBiCliqueAsZV}[1]{\ExampleBiCliqueInner{\Zcirc}{c}{}{#1}{!}}%
\newcommand{\ExampleBiCliqueInner}[5]{\GeneralExampleVRZ{#2}{#3}{#4}{#5}{	\EDGEdot{r}&
	\VERTEX\EDGE{d}\EDGE{drrr}\EDGE{drrrrrr}\EDGE{drrrrrrrrr}\EDGE{drrrrrrrrrrrr}&&&
	\VERTEX\EDGE{dlll}\EDGE{d}\EDGE{drrr}\EDGE{drrrrrr}\EDGE{drrrrrrrrr}&&&
	\VERTEX\EDGE{dllllll}\EDGE{dlll}\EDGE{d}\EDGE{drrr}\EDGE{drrrrrr}&&&
	\VERTEX\EDGE{dlllllllll}\EDGE{dllllll}\EDGE{dlll}\EDGE{d}\EDGE{drrr}&&&
	\VERTEX\EDGE{dllllllllllll}\EDGE{dlllllllll}\EDGE{dllllll}\EDGE{dlll}\EDGE{d}&
	\EDGEdot{l}\\
	\EDGEdot{r}&#1&&&#1&&&#1&&&#1&&&#1&\EDGEdot{l}
}}%

\newcommand{\ExampleABuniformZV}[1]{\ExampleABuniformZVinner{c}{}{#1}{!}}
\newcommand{\ExampleABuniformZVinner}[4]{\GeneralExampleVRZ{#1}{#2}{#3}{#4}{
	&&&\VERTEX\EDGE{dl}\EDGE{d}&
	&\VERTEX\EDGE{dl}\EDGE{d}&
	&\VERTEX\EDGE{dl}\EDGE{d}&
	&\VERTEX\EDGE{dl}\EDGE{d}&
	&\VERTEX\EDGE{dl}\EDGE{d}&
	\\
	\cdots&\Zcirc&\Zcirc&\Zcirc&\Zcirc&
\Zcirc&\Zcirc&\Zcirc&\Zcirc&\Zcirc&
\Zcirc&\Zcirc&\cdots\\
	&\VERTEX\EDGE{u}\EDGE{ur}&
	&\VERTEX\EDGE{u}\EDGE{ur}&
	&\VERTEX\EDGE{u}\EDGE{ur}&
	&\VERTEX\EDGE{u}\EDGE{ur}&
	&\VERTEX\EDGE{u}\EDGE{ur}&
}}

\newcommand{\ExampleBBuniformFourBoxesZVinner}[4]{\GeneralExampleVRZ{#1}{#2}{#3}{#4}{	&&\VERTEX\EDGE{dl}\EDGE{dr}&&
		\VERTEX\EDGE{dl}\EDGE{dr}&&		\VERTEX\EDGE{dl}\EDGE{dr}&&		\VERTEX\EDGE{dl}\EDGE{dr}&&
	\\
	\EDGEdot{r}&\Zcirc&&\Zcirc		&&\Zcirc	&&\Zcirc&&\Zcirc&\EDGEdot{l}\\
	&&\VERTEX\EDGE{ul}\EDGE{ur}&&
		\VERTEX\EDGE{ul}\EDGE{ur}&&		\VERTEX\EDGE{ul}\EDGE{ur}&&		\VERTEX\EDGE{ul}\EDGE{ur}&&
}}

\newcommand{\ExampleBBuniformThreeBoxesZVinner}[4]{\GeneralExampleVRZ{#1}{#2}{#3}{#4}{	&&\VERTEX\EDGE{dl}\EDGE{dr}&&
		\VERTEX\EDGE{dl}\EDGE{dr}&&		\VERTEX\EDGE{dl}\EDGE{dr}&&
	\\
	\EDGEdot{r}&\Zcirc&&\Zcirc	&&\Zcirc&&\Zcirc&\EDGEdot{l}\\
	&&\VERTEX\EDGE{ul}\EDGE{ur}&&
		\VERTEX\EDGE{ul}\EDGE{ur}&&		\VERTEX\EDGE{ul}\EDGE{ur}&&
}}

\newcommand{\ExampleBBuniformZVdecorated}[1]{\ExampleBBuniformZVdecoratedinner{c}{}{#1}{!}{2}}
\newcommand{\ExampleBBuniformZVdecoratedinner}[5]{%
\newcommand{\MAINedge} {\EDGE{#5,-#5}\EDGE{#5,#5}}%
\newcommand{\MAINedgeB}{\EDGE{-#5,-#5}\EDGE{-#5,#5}}%
\GeneralExampleVRZ{#1}{#2}{#3}{#4}{&&  &\VERTEX\EDGE{rd}\EDGE{rdd}\EDGE{ld}\EDGE{ldd}\EDGE{ddd}
&& && && &&
\VERTEX\EDGE{r}\EDGE{rd}\EDGE{rdd}\EDGE{rddd}&\VERTEX\EDGE{ld}\EDGE{ldd}\EDGE{lddd}\\
&&  \VERTEX\EDGE{rr}\EDGE{d}\EDGE{drr}\EDGE{ddr}&&\VERTEX\EDGE{d}\EDGE{dll}\EDGE{ddl}
& && && &&
	\VERTEX\EDGE{r}\EDGE{rd}\EDGE{rdd}&\VERTEX\EDGE{ld}\EDGE{ldd}\\
&&  \VERTEX\EDGE{rr}\EDGE{dr}&&\VERTEX\EDGE{dl}
& && && &&
	\VERTEX\EDGE{r}\EDGE{rd}&\VERTEX\EDGE{ld}\\
 & && \VERTEX\MAINedge && &&\VERTEX\MAINedge && &&\VERTEX\MAINedge\EDGE{r}&\VERTEX  & &&\VERTEX\MAINedge
\\ \\
	& \Zcirc && && \Zcirc && && \Zcirc && && \Zcirc && && \Zcirc & 
\\ \\
 & && \VERTEX\MAINedgeB && &&\VERTEX\MAINedgeB && &&\VERTEX\MAINedgeB && &&\VERTEX\MAINedgeB\\
 & && \VERTEX\EDGE{u}  && &\VERTEX\EDGE{ur}&\VERTEX\EDGE{u}&\VERTEX\EDGE{ul}
}}

\newcommand{\ExampleBBuniformZVdecoratedinnerB}[5]{%
\newcommand{\MAINedge} {\EDGE{#5,-#5}\EDGE{#5,#5}}%
\newcommand{\MAINedgeB}{\EDGE{-#5,-#5}\EDGE{-#5,#5}}%
\GeneralExampleVRZ{#1}{#2}{#3}{#4}{&&  &\VERTEX\EDGE{rd}\EDGE{rdd}\EDGE{ld}\EDGE{ldd}\EDGE{ddd}
&& && && &&
\VERTEX\EDGE{r}\EDGE{rd}\EDGE{rdd}\EDGE{rddd}&\VERTEX\EDGE{ld}\EDGE{ldd}\EDGE{lddd}\\
&&  \VERTEX\EDGE{rr}\EDGE{d}\EDGE{drr}\EDGE{ddr}&&\VERTEX\EDGE{d}\EDGE{dll}\EDGE{ddl}
& && && &&
	\VERTEX\EDGE{r}\EDGE{rd}\EDGE{rdd}&\VERTEX\EDGE{ld}\EDGE{ldd}\\
&&  \VERTEX\EDGE{rr}\EDGE{dr}&&\VERTEX\EDGE{dl}
& && && &&
	\VERTEX\EDGE{r}\EDGE{rd}&\VERTEX\EDGE{ld}\\
 & && \VERTEX\MAINedge && &&\VERTEX\MAINedge && &&\VERTEX\MAINedge\EDGE{r}&\VERTEX  & &&\VERTEX\MAINedge
\\ \\
	& \Zcirc && && \Zcirc\EDGE{rrrr} && && \Zcirc && && \Zcirc && && \Zcirc & 
\\ \\
 & && \VERTEX\MAINedgeB && &&\VERTEX\MAINedgeB && &&\VERTEX\MAINedgeB && &\VERTEX\EDGE{uul}&\VERTEX\MAINedgeB\\
 & && \VERTEX\EDGE{u}  && &\VERTEX\EDGE{ur}&\VERTEX\EDGE{u}&\VERTEX\EDGE{ul}&&&&\VERTEX\EDGE{uuur}&\VERTEX\EDGE{ur}&\VERTEX\EDGE{u}&\VERTEX\EDGE{ul}
}}

\newcommand{\ExampleBBuniformZVdecoratedinnerC}[5]{%
\newcommand{\MAINedge} {\EDGE{#5,-#5}\EDGE{#5,#5}}%
\newcommand{\MAINedgeB}{\EDGE{-#5,-#5}\EDGE{-#5,#5}}%
\GeneralExampleVRZ{#1}{#2}{#3}{#4}{&&  &\VERTEX\EDGE{rd}\EDGE{rdd}\EDGE{ld}\EDGE{ldd}\EDGE{ddd}
&& && && &&
\VERTEX\EDGE{r}\EDGE{rd}\EDGE{rdd}\EDGE{rddd}&\VERTEX\EDGE{ld}\EDGE{ldd}\EDGE{lddd}
	&&&& \VERTEX\EDGE{rr}\EDGE{rddddd} \EDGE{d}\EDGE{drr} \EDGE{dd}\EDGE{dddrr} \EDGE{dd}\EDGE{dddrr}  \EDGEcurved{dd}{-1em}\EDGEcurved{ddd}{-2em}
		&&\VERTEX\EDGE{lddddd} \EDGE{d}\EDGE{dll} \EDGE{dd}\EDGE{ddll} \EDGE{ddd}\EDGE{dddll}  \EDGEcurved{dd}{1em}\EDGEcurved{ddd}{2em}
\\
&&  \VERTEX\EDGE{rr}\EDGE{d}\EDGE{drr}\EDGE{ddr}&&\VERTEX\EDGE{d}\EDGE{dll}\EDGE{ddl}
& && && &&
	\VERTEX\EDGE{r}\EDGE{rd}\EDGE{rdd}&\VERTEX\EDGE{ld}\EDGE{ldd}
	&&&& \VERTEX\EDGE{rr}\EDGE{rdddd} \EDGE{d}\EDGE{drr} \EDGE{dd}\EDGE{ddrr}  \EDGEcurved{dd}{-1em}
		&&\VERTEX\EDGE{ldddd} \EDGE{d}\EDGE{dll} \EDGE{dd}\EDGE{ddll}  \EDGEcurved{dd}{1em}\\
&&  \VERTEX\EDGE{rr}\EDGE{dr}&&\VERTEX\EDGE{dl}
& && && &&
	\VERTEX\EDGE{r}\EDGE{rd}&\VERTEX\EDGE{ld}
	&&&& \VERTEX\EDGE{rr}\EDGE{rddd} \EDGE{d}\EDGE{drr}
		&&\VERTEX\EDGE{lddd} \EDGE{d}\EDGE{dll}
\\
 & && \VERTEX\MAINedge && &&\VERTEX\MAINedge && &&\VERTEX\MAINedge\EDGE{r}&\VERTEX  & &&\VERTEX\MAINedge
	&\VERTEX\EDGE{rr}\EDGE{rdd}&&\VERTEX\EDGE{ldd}
\\ \\
	& \Zcirc && && \Zcirc\EDGE{rrrr} && && \Zcirc && && \Zcirc && && \Zcirc & 
\\ \\
 & && \VERTEX\MAINedgeB && &&\VERTEX\MAINedgeB && &&\VERTEX\MAINedgeB && &\VERTEX\EDGE{uul}&\VERTEX\MAINedgeB\\
 & && \VERTEX\EDGE{u}  && &\VERTEX\EDGE{ur}&\VERTEX\EDGE{u}&\VERTEX\EDGE{ul}&&&&\VERTEX\EDGE{uuur}&\VERTEX\EDGE{ur}&\VERTEX\EDGE{u}&\VERTEX\EDGE{ul}
}}

\newcommand{\ExampleACuniformFiveBoxesZVinner}[4]{\GeneralExampleVRZ{#1}{#2}{#3}{#4}{	&&\VERTEX\EDGE{dl}\EDGE{d}\EDGE{dr}&&
		\VERTEX\EDGE{dl}\EDGE{d}\EDGE{dr}&&
		\VERTEX\EDGE{dl}\EDGE{d}
	\\
	\EDGEdot{r}&
		\Zcirc&\Zcirc&\Zcirc&\Zcirc&\Zcirc&\Zcirc
	&\EDGEdot{l}\\
	&\VERTEX\EDGE{u}\EDGE{ur}&
	&\VERTEX\EDGE{ul}\EDGE{u}\EDGE{ur}&
	&\VERTEX\EDGE{ul}\EDGE{u}\EDGE{ur}
}}

\newcommand{\ExampleACuniformFourBoxesZVinner}[4]{\GeneralExampleVRZ{#1}{#2}{#3}{#4}{	&&\VERTEX\EDGE{dl}\EDGE{d}\EDGE{dr}&&
		\VERTEX\EDGE{dl}\EDGE{d}\EDGE{dr}&&
	\\
	\EDGEdot{r}&
		\Zcirc&\Zcirc&\Zcirc&\Zcirc&\Zcirc
	&\EDGEdot{l}\\
	&\VERTEX\EDGE{u}\EDGE{ur}&
	&\VERTEX\EDGE{ul}\EDGE{u}\EDGE{ur}&
	&\VERTEX\EDGE{ul}\EDGE{u}
}}

\newcommand{\ExampleBoxesSevenBoxesInner}[4]{\GeneralExampleVRZ{#1}{#2}{#3}{#4}{\EDGEdot{r} & \VERTEX\EDGE r\EDGE d & \VERTEX\EDGE r\EDGE d & 
\VERTEX\EDGE r\EDGE d & \VERTEX\EDGE r\EDGE d & \VERTEX\EDGE r\EDGE d& \VERTEX\EDGE r\EDGE d & \VERTEX\EDGE r\EDGE d & \VERTEX\EDGE d & \EDGEdot{l}\\
\EDGEdot{r} & \VERTEX\EDGE r & \VERTEX\EDGE r & \VERTEX\EDGE r & \VERTEX\EDGE r & \VERTEX\EDGE r & \VERTEX\EDGE r & \VERTEX\EDGE r 
& \VERTEX & \EDGEdot{l}
}}%

\newcommand{\ExampleBoxesFiveBoxesInner}[4]{\GeneralExampleVRZ{#1}{#2}{#3}{#4}{\EDGEdot{r} & \VERTEX\EDGE r\EDGE d & \VERTEX\EDGE r\EDGE d & 
\VERTEX\EDGE r\EDGE d & \VERTEX\EDGE r\EDGE d & \VERTEX\EDGE r\EDGE d & \VERTEX\EDGE d & \EDGEdot{l}\\
\EDGEdot{r} & \VERTEX\EDGE r & \VERTEX\EDGE r & \VERTEX\EDGE r & \VERTEX\EDGE r & \VERTEX\EDGE r 
& \VERTEX & \EDGEdot{l}
}}%

\newcommand{\ExampleBoxesFourBoxesInner}[4]{\GeneralExampleVRZ{#1}{#2}{#3}{#4}{\EDGEdot{r} & \VERTEX\EDGE r\EDGE d & \VERTEX\EDGE r\EDGE d & 
\VERTEX\EDGE r\EDGE d & \VERTEX\EDGE r\EDGE d & \VERTEX\EDGE r\EDGE d & \EDGEdot{l}\\
\EDGEdot{r} & \VERTEX\EDGE r & \VERTEX\EDGE r & \VERTEX\EDGE r & \VERTEX\EDGE r 
& \VERTEX & \EDGEdot{l}
}}%

\newcommand{\ExampleThreeBoxes}[1]{\ExampleThreeBoxesinner{c}{}{#1}{!}}
\newcommand{\ExampleThreeBoxesinner}[4]{\GeneralExampleVRZ{#1}{#2}{#3}{#4}{	\VERTEX\EDGE{r}\EDGE{d}&\VERTEX\EDGE{d}\\
	\VERTEX\EDGE{r}\EDGE{d}&\VERTEX\EDGE{r}\EDGE{d}&\VERTEX\EDGE{d}\\
	\VERTEX\EDGE{r}&\VERTEX\EDGE{r}&\VERTEX
}}%

\newcommand{\ExampleF}[1]{\ExampleFinner{c}{}{#1}{!}}
\newcommand{\ExampleFinner}[4]{\GeneralExampleVRZ{#1}{#2}{#3}{#4}{
	&&&\VERTEX
		\EDGE{dl}\EDGE{dlll}
		\EDGE{dr}\EDGE{drrr}
	\\
	\Zcirc&&\Zcirc&&\Zcirc&&\Zcirc\\
	&\VERTEX\EDGE{ul}\EDGE{ur}
	&&\VERTEX\EDGE{ul}\EDGE{ur}&
	&\VERTEX\EDGE{ul}\EDGE{ur}
}}%

\newcommand{\ZigZagGrid}[1]{\ZigZagGridinner{c}{}{#1}{!}}
\newcommand{\ZigZagGridinner}[4]{\GeneralExampleVRZ{#1}{#2}{#3}{#4}{&  & \VERTEX &  & \VERTEX &  & \VERTEX &  & \VERTEX &  & \EDGEdot{dl}\\
 & \VERTEX\EDGE{ur} &  & \VERTEX\EDGE{ul}\EDGE{ur} &  & \VERTEX\EDGE{ul}\EDGE{ur} &  & \VERTEX\EDGE{ul}\EDGE{ur} &  & \VERTEX\EDGE{ul} &  & \EDGEdot{dl}\\
\EDGEdot{ur} &  & \VERTEX\EDGE{ul}\EDGE{ur} &  & \VERTEX\EDGE{ul}\EDGE{ur} &  & \VERTEX\EDGE{ul}\EDGE{ur} &  & \VERTEX\EDGE{ul}\EDGE{ur} &  & \VERTEX\EDGE{ul}\\
 & \EDGEdot{ur} &  & \VERTEX\EDGE{ul}\EDGE{ur} &  & \VERTEX\EDGE{ul}\EDGE{ur} &  & \VERTEX\EDGE{ul}\EDGE{ur} &  & \VERTEX\EDGE{ul}\EDGE{ur}
}}

\newcommand{\TwoCycle}[3]{\TwoCycleInner{c}{}{#1}{!}{#2}{#3}}
\newcommand{\TwoCycleInner}[6]{\GeneralExampleVRZ{#1}{#2}{#3}{#4}{#5\EDGE{r}  &  #6}}%

\newcommand{\ThreeCycle}[4]{\ThreeCycleInner{c}{}{#1}{!}{#2}{#3}{#4}}
\newcommand{\ThreeCycleInner}[7]{\GeneralExampleVRZ{#1}{#2}{#3}{#4}{	  &  #5\EDGE{ddr}\EDGE{ddl}\\
 \\
 #6\EDGE{rr}  &   &  #7
}}%

\newcommand{\FourCycle}[5]{\FourCycleInner{c}{}{#1}{!}{#2}{#3}{#4}{#5}}
\newcommand{\FourCycleInner}[8]{\GeneralExampleVRZ{#1}{#2}{#3}{#4}{  &  #5\EDGE{dr}\EDGE{dl}\\
 #8  &   &  #6\\
  &  #7\EDGE{ur}\EDGE{ul} 
}}%

\newcommand{\FiveCycle}[6]{\FiveCycleInner{c}{}{#1}{!}{#2}{#3}{#4}{#5}{#6}}
\newcommand{\FiveCycleInner}[9]{\GeneralExampleVRZ{#1}{#2}{#3}{#4}{&   &  #5\EDGE{drr}\EDGE{dll}\\
 #9  &   &   &   &  #6\\
  &  #8\EDGE{lu}\EDGE{rr}  &   &  #7\EDGE{ru} 
}}%

\newcommand{\FiveCycleDecorated}[4]{\GeneralExampleVRZ{#1}{#2}{#3}{#4}{&&		
&&   &&  \VERTEX\EDGE{ddrrrr}\EDGE{ddllll} && && 
&&& \VERTEX\EDGE{dll}\\
&\VERTEX\EDGE{dd}\EDGE{rd}&
&&   &&   &&   &&  
& \VERTEX\EDGE{ld}\\
\VERTEX\EDGE{rr}\EDGE{ru}\EDGE{rd}&&		 
\VERTEX  &&   &&   &&   &&  \VERTEX
 			&&& \VERTEX\EDGE{llu}\\
&\VERTEX\EDGE{ru}&
&&   &&   &&   &&  
&& \VERTEX\EDGE{llu}\\
&&		  
&&  \VERTEX\EDGE{lluu}\EDGE{rrrr}  &&   &&  \VERTEX\EDGE{rruu} 
}}

\newcommand{\CsixLabeledinner}[4]{\GeneralExampleVRZ{#1}{#2}{#3}{#4}{	& \VERTEXwithName 0\EDGE{rd}\\
	\VERTEXwithName 5\EDGE{ru} &  & \VERTEXwithName 1\EDGE d\\
	\VERTEXwithName 4\EDGE u &  & \VERTEXwithName 2\EDGE{ld}\\
	 & \VERTEXwithName 3\EDGE{lu}
}}%

\newcommand{\ExampleCounterExampleWeightedGraph}[1][]{%
	\begin{array}{c}%
		\xymatrix@C=0pt#1{%
			& \VERTEXwithName v\EDGEwithDist{dl}1\\
			\VERTEXwithName{z_{\ell}} &  & \VERTEXwithName{z_{r}}\EDGEwithDist{ul}{10}
		}\\
		 \begin{array}{l}
			 \GRvVerts=\left\{  v\right\}  \\
			 \GRzVerts=\left\{  z_{\ell},z_{r}\right\}  
		\end{array} \end{array} }

\title{Manipulation-resistant facility location mechanisms for \ZVlineGraphsTEXT}

\author{Ilan Nehama\thanks{Corresponding author\protect \\
\smallskip{}
\protect \\
We would like to thank Kentaro Yahiro and \Nathanael Barrot for
their comments which helped us to improve the presentation of this
work.  This work was partially supported by JSPS KAKENHI Grant Numbers
JP17H00761 and JP17H04695,  JST Strategic International Collaborative
Research Program, SICORP, and Israel Science Foundation Grant 1626/18.}\quad{}Taiki Todo\quad{}Makoto Yokoo\\
}
\maketitle
\begin{abstract}

In many real-life scenarios, a group of agents needs to agree on a
common action, e.g., on the location for a public facility, while
there is some consistency between their preferences, e.g., all preferences
are derived from a common metric space.  The \emph{facility location}
problem models such scenarios and it is a well-studied problem in
social choice.  We study mechanisms for facility location on unweighted
undirected graphs, which are resistant to manipulations (\emph{strategy-proof},
\emph{abstention-proof}, and \emph{false-name-proof}) by both individuals
and coalitions and are efficient (\emph{Pareto optimal}).  We define
a family of graphs, \emph{\ZVlineGraphsTEXT}, and show a general
facility location mechanism for these graphs which satisfies all these
desired properties.  Moreover, we show that this mechanism can be
computed in polynomial time, the mechanism is anonymous, and it can
equivalently be defined as the first Pareto optimal location according
to some predefined order.

Our main result, the \ZVlineGraphsTEXT family and the mechanism we
present for it, unifies the few current works in the literature of
false-name-proof facility location on discrete graphs, including all
the preliminary (unpublished) works we are aware of.  Finally, we
discuss some generalizations and  limitations of our result for problems
of facility location on other structures.
\end{abstract}

\newpage{}

\tableofcontents{}

\newpage{}

\section{Introduction}
Reaching an agreement could be hard. The seminal works of Gibbard~\cite{Gibbard1973} and Satterthwaite~\cite{Satterthwaite1975}
show that one cannot devise a general procedure for aggregating the
preferences of strategic agents to a single outcome, besides trivial
procedures that a-priori ignore all agents except one (that is, the
outcome is based on the preference of a predefined agent) or a-priori
rule out all outcomes except two (that is, regardless of the agents'
preferences, the outcome is one of two predefined outcomes). The problem
is that  agents might act strategically aiming to get an outcome
which they prefer, so there might be scenarios in which for any profile
of actions (a possible agreement) at least one of the agents will
prefer changing his action.   Note that while we refer to a \emph{procedure}
and later to a \emph{mechanism}, this impossibility is not technical
but conceptual. We identify a procedure with the conceptual mapping
induced by the procedure from the opinions of the agents to an agreement,
while the procedure itself could be complex and abstract, e.g., to
have several rounds or include a deliberation process between the
agents (cheap-talk). For simplicity of terms, we refer to the \emph{direct
mechanism} which implements this mapping. That is, we think of an
exogenous entity, the \emph{designer}, who receives as input the opinions
of the agents and returns as output the aggregated decision.\footnote{For the properties we study in this work, this assumption does not
hurt the generality, as according to the revelation principle~\cite{Myerson79},
any general procedure is equivalent (w.r.t. the properties we study)
to such a direct mechanism.}

 But in many natural scenarios,  it is exogenously given that the
preferences satisfy some additional rationality property, i.e., the
mechanism should not be defined for any profile of preferences, giving
rise to mechanisms that are not prone to the above drawbacks. Two
prominent examples are \emph{VCG mechanisms} and \emph{generalized-median
mechanisms}. VCG mechanisms~\cite{Vickrey1960,clarke1971,Roberts1979,groves1973}
are the mechanisms which are resistant to manipulations like the ones
described above for scenarios in which the agents' preferences are
quasi-linear with respect to money~\cite[Def.~3.b.7]{Mas-Colell1995},
and monetary transfers are allowed .  The second example, \emph{Generalized-median
mechanisms}, do not include monetary transfers and have more of an
ordinal flavor. Generalized-median mechanisms~\cite{Moulin1980}
are the mechanisms which are resistant to manipulations like above
when it is known that the preferences are \emph{single-peaked} w.r.t.
the real line~\cite{Black1948}. That is, the outcomes are \emph{locations}
on the real line, each agent has a unique optimal location, $\ell^{\star}$,
and her preference over the locations to the right of $\ell^{\star}$
is derived by the proximity to $\ell^{\star}$, and similarly for
the locations to the left of $\ell^{\star}$. For example, in the
Euclidean single-peaked case, the preferences for all agents are minimizing
the distance to their respective optimal locations.

\subsection*{The facility location problem}

A natural generalization of the second scenario is the \emph{facility
location} problem. In this problem,  we are given a metric space over the outcomes (that is, a distance
function between outcomes) and it is assumed that the preference of
each of the agents is defined by the distance to her optimal outcome:
An agent with an optimal outcome $\ell^{\star}$ prefers outcome $a$
over outcome $b$ if and only if  $a$ is closer to $\ell^{\star}$
than $b$. For ease of presentation, throughout this paper we assume
that there are finitely many agents and finitely many locations. In
this case, a natural way to represent the common metric space is using
a weighted undirected graph. That is, having a vertex (location) for
each outcome and weighted edges between vertices s.t. the distance
between any two outcomes is equal to the distance between the two
respective vertices.   Roughly speaking, given such a graph one
seeks to find a mechanism that on one hand will not a-priori ignore
some of the voters or rule-out some of the locations, and on the other
hand will be resistant to manipulations of the agents.    Facility location problems, and moreover facility location problems
for complex combinatorial structures, model many real-life scenarios
of group decision making in which it is natural to assume some homogeneity
between the different agents' preferences (e.g., an additional rationality
assumption). These examples include not only locating a common facility,
like a school, a bus-stop, or a library, but also more general agreement
scenarios with a common metric, e.g., partition of a common budget
to several tasks, committee selection, and group decision making with
a multi-dimensional criteria. Following the common facility problem,
we sometimes refer to the outcome of the mechanism as \emph{the facility}.

 In this work, we seek mechanisms which satisfy the following desired
properties:


\paragraph*{Anonymity:}

The mechanism should not a-priori ignore agents and moreover it should
treat them equally in the following strong sense. The mechanism should
be a function of the agents' votes (which we also refer to as \emph{ballots})
but not their identities. Formally, the outcome of the mechanism should
be invariant to voters exchanging votes, i.e., to a permutation of
the ballots. In practice, most voting systems satisfy this property
by first accumulating the different (physical) ballots, thus losing
the voters' identities, and next applying the mechanism on the identity-less
ballots.


\paragraph*{Onto:\protect\footnote{In the social choice literature~\cite{Aleskerov200295}, this property
is referred to as \emph{Citizen sovereignty} or \emph{Non-Imposition}.}}

The mechanism should not a-priori rule-out a location, and each location
should be an outcome of some profile. Formally, the mapping to a facility
location should be an onto function.  Moreover, the mechanism should
respect the preferences of the agents and aim to optimize the aggregated
welfare of the agents. 


\paragraph*{Pareto optimality:}

 The mechanism should not return a location $\ell$ if there exists
another location $\xPrime{\ell}$ s.t. switching from $\ell$ to $\xPrime{\ell}$
will benefit one of the agents (move the facility closer to her) while
not hurting any of the other agents. In particular, if there exists
a unique location which is unanimously most-preferred by all agents,
then it must be the outcome. Note that any reasonable (monotone) notion
of aggregated welfare optimization entails Pareto optimality.  (Note that it is unreasonable to require that all locations are treated
equally  due the inherent asymmetry induced by the graph.)


\paragraph*{Strategy-proofness:}

An agent should not be able to change the outcome to a location she
strictly prefers by reporting a location different than her true location.


\paragraph*{Abstention-proofness:\protect\footnote{In the voting literature (e.g., \cite{DBLP:conf/wine/Conitzer08,Moulin1988,Fishburn1983})
this property is also referred to as \textbf{voluntary participation}
and the \textbf{no-show paradox}. This property is also equivalent
to \textbf{individual-rationality} which takes a different point of
view of mechanism design.}}

An agent should not be able to change the outcome to a location she
strictly prefers by not casting a ballot.


\paragraph*{False-name-proofness:}

An agent should not be able to change the outcome to a location she
strictly prefers by casting more than one ballot.

False-name-manipulations received less attention in the classic social
choice literature, since in most voting scenarios there exists a central
authority that can enforce a `one person, one vote' principle (but
cannot enforce participation or sincere voting).  In contrast, many
of the voting and aggregation scenarios nowadays are run in a distributed
manner on some network and include virtual identities or avatars,
which can be easily generated, so a manipulation of an agent pretending
to represent many voters is eminent.


\paragraph*{Resistance to group manipulations:}

We also consider generalizations of the above three properties dealing
with manipulations of coalitions of agents. We define the \emph{preference
of a coalition} as the unanimous preference of its members, that is,
a coalition $C$ weakly prefers an outcome $a$ over an outcome $b$
if all the members of $C$ weakly prefer $a$ over $b$, and require
that a coalition should not be able to change the outcome to a location
it strictly prefers\footnote{Hence, $C$ strictly prefers $a$ over $b$ if $\boldsymbol{\left(\rmnum 1\right)}$
all the members of $C$ weakly prefer $a$ over $b$ ($C$ weakly
prefers $a$ over $b$), and $\boldsymbol{\left(\rmnum 2\right)}$
at least one member of $C$ strictly prefers $a$ over $b$ ($C$
does not weakly prefer $b$ over $a$).} by its members casting insincere ballots, abstaining, or casting
more than one ballot. We note that for onto mechanisms this property
entails Pareto optimality. Nevertheless, we prefer to think of Pareto
optimality apart from this property due to the different motivations.

\subsection*{Our contribution}

Besides the work of Todo et al.~\cite{DBLP:conf/atal/TodoIY11},
who characterized the false-name-proof mechanisms for facility location
on the continuous line and on continuous trees, we are not aware of
other works dealing with characterizing false-name-proof mechanisms
on a graph. Moreover, as far as we know, a false-name-proof mechanism
is known to the community only for very few simple graphs, and  the
current knowledge is still highly preliminary. (When starting to work
on this problem, we initially devised mechanisms for few of the examples
we describe below - cycles, cliques, and the $2\times n$ grid. We
are not aware of any other previously-known positive results besides
these graphs or small perturbations of them.)

In this paper we present a family of unweighted undirected graphs,
which we name \ZVlineGraphsTEXT, and show a general mechanism for
facility location over these graphs which satisfies the desired properties.
To the best of our knowledge, this is the first work to show a general
false-name-proof mechanism for a general family of graphs. Our mechanism
for the \ZVlineGraphsTEXT family unifies the few mechanisms that
are known and induces mechanisms for many other graphs. The mechanism
is Pareto optimal and in particular satisfies citizen sovereignty;
It is anonymous, so in particular no agent is ignored; But on the
other hand, it is resistant to all the above manipulations. 

Roughly speaking, in a \ZVlineGraphTEXT there are two types of locations
$\GRzVerts$ and $\GRvVerts$ (and we refer to them as $\Zvertices$
and $\Vvertices$, respectively), and the facility is `commonly' (except
if all agents unanimously agree differently) located on a $\Zvertex$.
For instance, the $\Zvertices$ could represent commercial locations
for locating a public mall, or a set of status-quo outcomes.

\global\long\def\EXAMPLEocINLINEgrpahHeight{.15em}

\global\long\def\EXAMPLEocExampleABuniformZV{\ExampleABuniformZVinner{c}{}{9.5cm}{!}}

\global\long\def\EXAMPLEocExampleBoxes{\ExampleBoxesinner{c}{}{!}{.2em}}

\global\long\def\EXAMPLEocExampleF{\ExampleFinner{c}{}{!}{\EXAMPLEocINLINEgrpahHeight}}

\global\long\def\EXAMPLEocExampleThreeBoxes{\ExampleThreeBoxesinner{c}{}{!}{\EXAMPLEocINLINEgrpahHeight}}

For example, consider the following family of graphs (which is a sub-family
of \ZVlineGraphsTEXT and captures the  gist of our mechanism). Let
$\GR=\left\langle \GRvNEW,\GRe\right\rangle $ be a bipartite unweighted
undirected graph with vertex set $\GRvNEW$ and edge set $\GRe$.
That is, there exists a partition of the vertices $\GRvNEW=\GRvVerts\DisjCup\GRzVerts$
s.t. there are no edges between $\Vvertices$ and no edges between
$\Zvertices$. In addition, we require that $\boldsymbol{\left(a\right)}$
there exists a predefined order over the $\Zvertices$, which we refer
to as left-to-right order, and that $\boldsymbol{\left(b\right)}$
any of the $\Vvertices$ is connected to an interval (according to
the order) of $\Zvertices$. Similarly to the single-peaked consistency
case~\cite{Black1948}, one can  think of this constraint as a homogeneity
constraint over the agents' preferences.  Our mechanism for such
graphs: 
\begin{itemize}
\item[$\blacktriangleright$]  The mechanism returns the leftmost Pareto optimal $\Zvertex$,\footnote{That is, there exists no other location $\ell$ in the graph s.t.
switching the outcome to $\ell$ benefits one of the agents while
not hurting any of the other agents.} if one exists.
\item[$\blacktriangleright$]  If no location in $\GRzVerts$ is Pareto optimal, then necessarily
all agents voted for the same location, and the mechanism returns
this location.
\end{itemize}
For example, bi-cliques (full bipartite graphs) can be represented
as a \ZVlineGraphTEXT in which each $\Vvertex$ is connected to all
the $\Zvertices$ as follows (and we use below $\Zcirc$ for $\Zvertices$
and $\VERTEX$ for $\Vvertices$):\vspace{-1.2em}

\[
\ExampleBiCliqueInner{\Zcirc}{c}{@R=3em}{.38\paperwidth}{!}
\]

\noindent{}Our mechanism for this case:
\begin{itemize}
\item[$\blacktriangleright$]  If all agents voted unanimously for the same location, the mechanism
returns this location.
\item[$\blacktriangleright$]  If all agents voted for $\Vvertices$, the mechanism returns the
leftmost $\Zvertex$. 
\item[$\blacktriangleright$]  Otherwise, the mechanism returns the leftmost $\Zvertex$ that was
voted for.
\end{itemize}
Notice that in this case the order over the $\Zvertices$ is arbitrary
(as well as the choice of one of the sides to be the $\Zvertices$)
in the sense that it is not derived from the graph but a parameter
of the mechanism. For instance, the order might represent the social
norm of the society. 

A second example is the discrete line graph, which can be represented
as a \ZVlineGraphTEXT in which every two consecutive $\Zvertices$
are connected by a unique $\Vvertex$, $\ExampleABuniformZVinner{c}{@R=1.5ex@C=1.5ex}{!}{.3em}$.
  In particular, we show strategy-proof, false-name-proof, Pareto
optimal mechanisms which are far from \emph{generalized-median mechanisms}
(for instance, in the common case the output of the mechanism belongs
to a subset consisting of only half of the locations), in contrary
to the characterization of these mechanisms for the continuous line~\cite[Thm.~2]{DBLP:conf/atal/TodoIY11}.
 Dokow et al.~\cite[Thm.~3.4]{Dokow2012} characterized the strategy-proof
mechanisms for the discrete line as a superset of generalized-median
mechanisms, hence we get a strict subset of their characterization
(and actually a small fraction of their characterization) due to requiring
also false-name-proofness.

Two simple  graphs that are generalizations of (the \ZVlineGraphTEXT
representation of) the discrete line graph are   $\ExampleBBuniformThreeBoxesZVinner{c}{@R=1.5ex@C=1.5ex}{!}{.3em}$,
in which every two consecutive $\Zvertices$ are connected by two
$\Vvertices$, and the $2\times n$ grid  $\ExampleBoxesFourBoxesInner{c}{@R=2ex@C=2ex}{!}{.3em}$
which can be represented as a \ZVlineGraphTEXT in which every three
consecutive $\Zvertices$ are connected by a unique $\Vvertex$, i.e.,
 $\ExampleACuniformFourBoxesZVinner{c}{@R=1em}{3.5cm}{!}$.

A common property to all the above examples is their regularity: All
the $\Vvertices$ have the same degree and similarly all the $\Zvertices$
have the same degree. An example we encountered of a non-regular graph
for which a mechanism  exists is $\EXAMPLEocExampleThreeBoxes$, 
 which can be represented as a non-regular \ZVlineGraphTEXT as 
$\ExampleFinner{c}{@R=1.5ex@C=1.5ex}{!}{.3em}$.

In the definition of the \ZVlineGraphsTEXT family we extend the above
family (and extend the mechanism accordingly) in two different ways:
allowing edges between the $\Zvertices$ (under a similar interval
constraint), and replacing vertices by a tree, a clique, or any other
\ZVlineGraphTEXT.  For example,\vspace{-.7em}
\[
\ExampleBBuniformZVdecoratedinnerC{c}{@R=.6em}{.38\paperwidth}{!}{2}\text{.}
\]

In particular, the \ZVlineGraphsTEXT family includes all trees, cliques,
block graphs~\cite{Harary1963}, cycles of size up to $4$ (note
that there is no manipulation-resistant Pareto optimal anonymous mechanism
for cycles of size larger than~$5$), and all graphs for which (as
far as we found) a false-name-proof mechanism is known to the community.%

\subsection*{Related work}

Problems of facility location on discrete graphs were also studied
by Dokow et al.~\cite{Dokow2012}, who characterized the strategy-proof
mechanisms for the discrete line and discrete cycle. Other variants
of the facility location problem were also considered in the literature.
For instance, Schummer and Vohra~\cite{Schummer2002} considered
the case of continuous graphs, Lu et al.~\cite{lu2009tighter,LSWZ10}
studied variants in which several facilities need to be located and
scenarios in which an agent is located on several locations, and Feldman
et al.~\cite{DBLP:conf/sigecom/FeldmanFG16} studied the impact of
constraining the input language of the agents.

False-name-proofness was first introduced by Yokoo et al.~\cite[(based on a series of previous conference papers)]{DBLP:journals/geb/YokooSM04}
in the framework of combinatorial auctions. In this work, the authors
showed that the VCG mechanism does not satisfy false-name-proofness
in the general case, and they proposed a property of the preferences
under which this mechanism becomes false-name-proof. A similar concept
was also studied in the framework of peer-to-peer systems by Douceur~\cite{DBLP:conf/iptps/Douceur02}
under the name sybil attacks. Later, Conitzer and Sobel~\cite{DBLP:conf/wine/Conitzer08}
analyzed false-name-proof mechanisms in voting scenarios, Todo et
al.~\cite{DBLP:conf/atal/TodoIYS09} characterized other false-name-proof
mechanisms for combinatorial auctions, and Todo et al.~\cite{DBLP:conf/atal/TodoIY11}
characterized the false-name-proof mechanisms for facility location
on the continuous line and on continuous trees. In a recent work,
Ono et al.~\cite{DBLP:conf/prima/OnoTY17} showed, in the framework
of facility location on the discrete line, a relation between false-name-proofness
and the property of \emph{population monotonicity}.

The characterization of manipulation-resistant mechanisms for facility
location  is highly related to problems in \emph{Approximate mechanism
design without money}~\cite{DBLP:journals/teco/ProcacciaT13}. In
these problems, agents are characterized using cardinal utilities
and the designer seeks to find an outcome maximizing a desired target
function (e.g., sum of utilities, product of utilities, or minimal
utility). These works bound the trade-of between the target function
and manipulation-resistance, that is, they bound the loss to the target
function due to manipulation-resistance constraints. Similar bounds
were derived for  false-name-proof facility location mechanisms on
the continuous line and tree by  Todo et al.~\cite{DBLP:conf/atal/TodoIY11},
strategy-proof facility location on the continuous cycle by Alon
et al.~\cite{DBLP:journals/mor/AlonFPT10}, and for strategy-proof
facility location on the discrete line and cycle by Dokow et al.~\cite{Dokow2012}.

\subsubsection*{Approximate mechanism design}

In this work we do not analyze the approximation implications of
the characterization and in particular we do not assume a specific
cardinal representation of the agents' preferences.  Yet, we claim
 that for most natural representations and target functions the approximation
ratio is expected to be bad. For example, recall the above bi-clique
example. In this mechanism, the facility might be located on an `extremely'
left $\Zvertex$. Moreover,  the facility might be very far from
the vast majority of the agents, resulting in a very bad approximation
ratio for most reasonable target functions. This phenomenon is not
specific for the bi-clique graphs. For most \ZVlineGraphsTEXT, (due
to the false-name-proof requirement) the mechanism might be located
on a location extremely far from almost all agents, resulting in a
very bad approximation ratio (roughly, the number of agents times
the girth of the graph) for most reasonable target functions.%

\newpage{}

\section{Model}

Consider a graph $\GR=\left\langle \GRvNEW,\GRe\right\rangle $ with
a set of vertices $\GRvNEW$ and a set of, neither weighted nor directed,
edges $\GRe\subseteq\binom{\GRvNEW}{2}$, and we refer to the vertices
$v\in\GRvNEW$ also as \emph{locations} and use the two terms interchangeably.
 The distance between two vertices $v,u\in\GRvNEW$, notated $\DISTof vu$,
is the length of the shortest path connecting $v$ and $u$,\footnote{For simplicity, we assume the graph is connected.}
and the distance between a vertex $v\in\GRvNEW$ and set of vertices
$S\subseteq\GRvNEW$, $\DISTof vS$, is defined as the minimal distance
between $v$ and a vertex in $S$. We define $\BallOf vd$, the \emph{ball}
of radius $d\geqslant0$ around a vertex $v\in\GRvNEW$, to be the
set of vertices of distance at most $d$ from $v$,  $\BallOf vd=\left\{ u\in\GRvNEW\SetSt\DISTof vu\leqslant d\right\} $.
We say that two vertices are \emph{neighbors}  if there is an edge
connecting them  and notate by $\NeighOf v$ the set of neighbors
of a vertex $v$.

An instance of the \emph{facility location problem over $\GR$} 
is comprised of $n$ agents who are located on vertices of $\GRvNEW$;
Formally, we represent it by a \emph{location profile} $\LOCvec\in\GRvNEW^{n}$
where $\LOCof i$ is the location of Agent~$i$. Given an instance
$\LOCvec$, we would like to locate a facility on a vertex of the
graph while taking into account the preferences of the agents over
the locations. In this work, we assume the preference of an agent
is defined  by her distance to the facility:  An agent located on
$\LOCof{}\in\GRvNEW$ strictly prefers the facility being located
on $v\in\GRvNEW$ over it being located on $u\in\GRvNEW$ iff $\DISTof{\LOCof{}}v<\DISTof{\LOCof{}}u$.

A \emph{general facility location mechanism} (or shortly a \emph{mechanism})
defines for any profile of agents' locations a location for the facility.
We require the mechanism to assign a location for the facility for
any profile and any number of agents. Hence, we represent the mechanism
by a function $\MECH\colon\VtoTheStar\rightarrow\GRvNEW$. We also
think on $\MECH$ as a voting procedure: Each agent votes (and we
also refer to his vote as a \emph{ballot}) for a location, and based
on the ballots $\MECH$ returns a location for the facility. We say
that a mechanism is \textbf{anonymous} if the outcome $\MECHof{\LOCvec}$
does not depend on the identities of the agents, i.e., it can be defined
as a function of the ballot tally, the number of votes for each of
locations.

\subsection*{Manipulation-resistance}

A strategic agent might act untruthfully if she thinks it might cause
the mechanism to return a location she prefers (that is, a location
closer to her). In this work we consider the following manipulations:
\textbf{ Misreport}: An agent  might report to the mechanism a location
different from her real location; \textbf{ False-name-report}: An
agent might pretend to be several agents and submit several (not necessarily
identical) ballots; \textbf{ Abstention}: An agent might choose
not to participate in the mechanism at all. A mechanism in which no
agent benefits from these manipulations, regardless to the ballots
of the other agents, is said to be \textbf{strategy-proof}, \textbf{false-name-proof},
and \textbf{abstention-proof}, respectively. We also consider a generalization
of these manipulations  to manipulations of a coalition, and say a
mechanism is \textbf{group-manipulation-resistant} (shortly manipulation-resistant)
if no coalition can change the outcome, by misreporting, false-name-reporting,
or abstaining, to a different location which they unanimously agree
is no worse than the original outcome (i.e., if they vote sincerely)
and at least one of the coalition's members strictly prefers the new
location. 
\begin{defn}[Group-manipulation-resistant]
\MyLyxThmFN{For simplicity of notations, we give the formal definition for anonymous
mechanisms.} A mechanism $\MECH$ is  group\hyp{}manipulation\hyp{}resistant
if  there exists no coalition of agents $C\subseteq\left\{ 1,\ldots,n\right\} $,
a vector of locations $\LOCvec\in\GRvNEW^{n}$, and a set of ballots
$\REPORTset\in\VtoTheStar$ s.t. $\boldsymbol{\left(\rmnum 1\right)}$
all the members of $C$ weakly prefer $\MECHof{\REPORTset,\LOCvec_{-C}}$,
that is, the outcome when the agents outside of $C$ do not change
their vote and the agents of $C$ replace their ballots by $\REPORTset$,
over $\MECHof{\LOCvec}$ and $\boldsymbol{\left(\rmnum 2\right)}$
at least one of $C$'s members strictly prefers $\MECHof{\REPORTset,\LOCvec_{-C}}$
over $\MECHof{\LOCvec}$.
\end{defn}

We note that for $C=\left\{ i\right\} $ being a singleton, this general
manipulation coincides with misreport for $\sizeof{\REPORTset}=1$,
with false-name-report for $\sizeof{\REPORTset}\!>\!1$, and with
abstention for $\REPORTset\!=\!\emptyset$.

\subsubsection*{The revelation principle}

One could consider more general mechanisms in which the agents vote
using more abstract ballots, and define similar manipulation-resistance
terms for the general framework. Applying a simple direct revelation
principle~\cite{Myerson79} shows that any such general manipulation-resistant
mechanism is equivalent to a manipulation-resistant mechanism in our
framework: The two mechanisms implement the same mapping of the agents
private preferences to a location for the facility, and since the
above properties are defined for the mapping they are invariant to
this transformation.  That is, given some general mechanism $M$
that maps abstract actions to a location for the facility and a behavior
protocol $D$ that maps types of the agents (i.e., locations) to actions
of $M$, if $D$ satisfies the generalized desiderata, then the direct
mechanism $M\circ D$ satisfies our desiderata (w.r.t. truth-telling).

\subsection*{Efficiency}

So far, we defined the desired manipulation-resistance properties
for a mechanism. On the other hand, we would also like the mechanism
to respect the preferences of the agents. We would like to avoid a
scenario in which, after the mechanism has been used, the agents can
agree that a different location is preferable. Given a location profile
$\LOCvec\in\GRvNEW^{n}$, the set of \emph{Pareto optimal} locations,
$\POof{\LOCvec}$, is the set of all locations which the agents cannot
agree to rule out.  Formally, given two locations $v,u\in\GRvNEW$,
we say that $u$ \emph{Pareto dominates} $v$ (w.r.t. a location profile
$\LOCvec$) if $\boldsymbol{\left(\rmnum 1\right)}$ all agents weakly
prefer $u$ over $v$ and $\boldsymbol{\left(\rmnum 2\right)}$ at
least one agent strictly prefers $u$ over $v$. We say that $v$
is \emph{Pareto optimal} ($v\in\POof{\LOCvec}$) if it is not Pareto
dominated by any other location. We say a mechanism is \textbf{Pareto
optimal} if for any report profile $\LOCvec$ (and assuming truthful
reporting) $\MECHof{\LOCvec}\in\POof{\LOCvec}$. In particular, Pareto
optimality entails \textbf{unanimity}, if all the agents unanimously
vote for the same location then the mechanism outputs this location,
and \textbf{citizen sovereignty}, the mechanism is onto and does not
a-priori rule out any location.%

\newpage{}

\section{\label{sec:Main-Result}Main Result}

In this work, we define a family of graphs, \emph{\ZVlineGraphsTEXT
}, and present a general mechanism for this family. This family is
defined by introducing a simple combinatorial structure - partition
to two types of vertices and connectivity constraint. One could think
of the partition as representing a social agreement according to which
the mechanism is defined, e.g., a subset of status-quo locations or
an a-priori priority hierarchy over the locations. The connectivity
constraint (as the graph in general) represents the homogeneity over
the agents' preferences which allows us to show a manipulation-resistant
mechanism.
\begin{defn}[\ZVopartTEXT]
\label{def:ZV-partition}  Given an unweighted undirected connected graph $\GR=\left(\GRvNEW,\GRe\right)$
and a sequence of non-empty sets of vertices $\GRzVerts,\SubGraph 1,\ldots\allowbreak,\SubGraph k\allowbreak\subseteq\GRvNEW$,
we say that the sequence $\GRzVerts,\SubGraph 1,\ldots,\SubGraph k$
($k\geqslant0$) is a \ZVopartTEXT if the following holds.
\begin{enumerate}
\item The sets $\SubGraph i$ are disjoint, $\SubGraph i\cap\SubGraph j=\emptyset\text{\ensuremath{\quad}for }i\neq j$.
\item The sequence is a cover of $\GRvNEW$, $\GRzVerts\cup\SubGraph 1\cup\cdots\cup\SubGraph k=\GRvNEW$,
and no sub-sequence of it is a cover of $\GRvNEW$. 
\item For $i=1,\ldots,k$, there is a unique vertex in $\SubGraph i$ which
is closest to $\GRzVerts$. We refer to it as \emph{the root of $\SubGraph i$}
and denote it by $\RootOfSubGraph i$,\vspace{-.3em}
\[
\RootOfSubGraph i=\argmin_{v\in\SubGraph i}\DISTof v{\GRzVerts}\text{.}
\]
\vspace{-1em}
\item \label{enu:No intra-edges}All paths between vertices of $\SubGraph i$
and vertices outside of $\SubGraph i$ pass through the root $\RootOfSubGraph i$
and through~$\GRzVerts$.
\item Last, $\GRzVerts$ is equipped with an order (that is, an injective
mapping from $\GRzVerts$ to $\Re$). For simplicity of description,
we refer to this order as an order from left to right. We call a subset
$A$ of $\GRzVerts$ an \emph{interval} if $A$ is the preimage of
an interval in $\Re$, , i.e., if it is a sequence of vertices according
to the order.
\end{enumerate}
We use the notions $\ViSubGraphs$, $\Vvertices$, and $\Zvertices$
for the respective sets of vertices. Note that we do not require the
sets of the $\Zvertices$ and the $\Vvertices$ to be disjoint. For
instance, in the last example of the introduction the $9$-clique
includes the rightmost $\Zvertex$. Notice that from the third condition
it is clear that for all $i$ the intersection $\SubGraph i\cap\GRzVerts$
is of size at most one. 
\end{defn}

Given a graph $\GR=\left(\GRvNEW,\GRe\right)$ with a \ZVopartTEXT,
$\GRzVerts,\SubGraph 1,\allowbreak\ldots\allowbreak,\SubGraph k\subseteq\GRvNEW$,
and a sequence of mechanisms $\MECH_{i}\colon\XtoTheStar{\left(\SubGraph i\right)}\rightarrow\SubGraph i$
for $i=1,\ldots,k$, we define the following mechanism $\TheMECH\colon\VtoTheStar\rightarrow\GRvNEW$: 
\begin{defn}[$\TheMECH$]
\label{def:ZV-mechanism}Given a vector of reports $\LOCvec\in\VtoTheStar$ 
\begin{itemize}
\item[$\blacktriangleright$]  If all the ballots belong to the same $\SubGraph i$-subgraph, return
$\MECH_{i}\left(\LOCvec\right)$.
\item[$\blacktriangleright$]  Otherwise, return the leftmost Pareto optimal location in $\GRzVerts$.
\end{itemize}
\end{defn}

It is not hard to see the following: $\bullet$ $\TheMECH$ is well defined (If it does not hold that all
ballots belong to the same $\SubGraph i$-subgraph, then necessarily
$\POof{\LOCvec}\cap\GRzVerts\neq\emptyset$), $\bullet$ $\TheMECH$
runs in polynomial time, and $\bullet$ If $\MECH_{1},\ldots,\MECH_{k}$
can be defined as the first Pareto optimal location according to some
order, then an equivalent way to define $\TheMECH$ is as the first
Pareto optimal location in the following order:  First, go over the
vertices of $\GRzVerts$ from left to right, and then on the vertices
of the $\SubGraph i$-subgraphs in some order s.t. for each subgraph
the order over its vertices matches the order of $\MECH_{i}$.

Next, we define \ZVlineGraphsTEXT by introducing a connectivity constraint.
\begin{defn}[\ZVlineGraphTEXT]
\label{def:ZV-line} 

An unweighted undirected connected graph $\GR=\left(\GRvNEW,\GRe\right)$
is a \ZVlineGraphTEXT w.r.t. $\WRTaPartition$, if   $\boldsymbol{\left(a\right)}$
$\left\langle \GRzVerts,\SubGraph 1,\ldots\allowbreak,\SubGraph k\right\rangle $
is a \ZVopartTEXT of $\GR$,  $\boldsymbol{\left(b\right)}$ for
any vertex $z\in\GRzVerts$, $\BallOfCapZ z1$ is an interval in $\GRzVerts$,
 and if $k>0$ then for $i=1,\ldots,k$  $\boldsymbol{\left(c\right)}$
the induced graph $\GR_{i}=\left\langle \SubGraph i,\GRe\cap\left(\SubGraph i\times\SubGraph i\right)\right\rangle $
is a \ZVlineGraphTEXT,  $\boldsymbol{\left(d\right)}$  $\RootOfSubGraph i$
is a $\Zvertex$ of $\GR_{i}$ (that is a $\Zvertex$ in the representation
of $\GR_{i}$), and it is the leftmost $\Zvertex$ of f $\GR_{i}$,
and last $\boldsymbol{\left(e\right)}$ $\BallOfCapZ{\RootOfSubGraph i}1$
is an interval in $\GRzVerts$.
\end{defn}

For instance, for any $\ell\geqslant1$ the clique over $\ell$ vertices,
$\Clique{\ell}$, is a \ZVlineGraphTEXT w.r.t. $\GRzVerts=\GRvNEW$
and any order over the vertices. 

Given a \ZVlineGraphTEXT $\GR=\left\langle \GRvNEW,\GRe\right\rangle $,
applying Def.~\ref{def:ZV-mechanism} recursively on $\GR$ and its
$\SubGraph i$-subgraphs gives us a mechanism $\TheMECH\colon\VtoTheStar\rightarrow\GRvNEW$.
Note that $\TheMECH$ depends on the representation of $\GR$ w.r.t.
a specific \ZVopartTEXT, and in case that $\GR$ can be represented
as a \ZVlineGraphTEXT w.r.t. several \ZVopartsTEXT, they might result
in different mechanisms. Our main result shows that this mechanism
satisfies the desired properties. 

\begin{thm}[Main result]
\label{thm:MainThm-Ver2}
Let $\GR=\left(\GRvNEW,\GRe\right)$ be a \ZVlineGraphTEXT w.r.t.
$\WRTaPartition$ and let $\TheMECH\colon\VtoTheStar\rightarrow\GRvNEW$
be the result of applying Definition~\ref{def:ZV-mechanism}. recursively
on $\GR$. Then $\TheMECH$ is an anonymous Pareto optimal mechanism
and $\TheMECH$ satisfies:

\hfill\begin{minipage}[t]{.95\linewidth}

For any vector of locations $\LOCvec\in\left(\SubGraph i\right)^{n}$,
 a coalition of agents $C$,  and a set of ballots $\REPORTset\in\XtoTheStar{\left(\SubGraph i\right)}$,\footnotemark{}
 $\REPORTset$ is not a beneficial deviation for $C$ (That is, $C$
does not strictly prefer $\TheMECHof{\REPORTset,\LOCof{-C}}$ over
$\TheMECHof{\LOCvec}$.\footnotemark{}

\end{minipage}\addtocounter{footnote}{-2}\stepcounter{footnote}\footnotetext{Since
$\TheMECH$ is an anonymous mechanisms, we define $\REPORTset$ as
a set of ballots ignoring identities.}\stepcounter{footnote}\footnotetext{Since
$\TheMECH$ is onto, this property entails Pareto optimality. Yet,
we prefer to state explicitly Pareto optimality as a desired efficiency
property.}
\end{thm}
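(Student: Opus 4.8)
Proof proposal.

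The plan is to prove all three assertions by structural induction on the \ZVlineGraphTEXT $\GR$, establishing anonymity and Pareto optimality first (they are used in the manipulation-resistance argument) and reading the last displayed property as ranging over $\SubGraph i\in\{\GRvNEW,\SubGraph 1,\dots,\SubGraph k\}$, so that the genuine unconstrained statement is the case $\SubGraph i=\GRvNEW$ and the cases $\SubGraph i=\SubGraph j$ feed the recursion. Throughout I would use the reformulation noted right after Definition~\ref{def:ZV-mechanism}: by induction each $\MECHi i$ is ``return the first Pareto optimal location in some order'', hence so is $\TheMECH$, for the order $\prec$ that lists $\GRzVerts$ from left to right and then the vertices of the $\SubGraph i$'s (each with the internal order of $\MECHi i$). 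Anonymity is then immediate, since every branching decision — which $\SubGraph i$, if any, contains all ballots, and the Pareto-dominance relation — depends only on the ballot tally. For Pareto optimality: the ``spread'' branch outputs a Pareto optimal $\Zvertex$ by construction (one exists by the remark that $\TheMECH$ is well defined), while for the ``all ballots in $\SubGraph i$'' branch I would invoke Condition~\ref{enu:No intra-edges} of Definition~\ref{def:ZV-partition}: every path from $\SubGraph i$ to the outside passes through $\RootOfSubGraph i$, so $\DISTof xu=\DISTof x{\RootOfSubGraph i}+\DISTof{\RootOfSubGraph i}u$ for $x\in\SubGraph i$ and $u\notin\SubGraph i$, and distances internal to $\SubGraph i$ coincide with those of $\GR_i$; hence any location outside $\SubGraph i$ that weakly Pareto-dominates some $w\in\SubGraph i$ for a profile inside $\SubGraph i$ is itself weakly dominated, with the same strictness, by $\RootOfSubGraph i\in\SubGraph i$, so a Pareto optimum of $\GR_i$ is a Pareto optimum of $\GR$. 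In particular $\TheMECH$ is onto and unanimous.

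For manipulation-resistance, fix $\LOCvec\in\left(\SubGraph i\right)^{n}$, a coalition $C$, reports $\REPORTset\in\XtoTheStar{\left(\SubGraph i\right)}$, and set $o=\TheMECHof{\LOCvec}$, $o'=\TheMECHof{\REPORTset,\LOCof{-C}}$. If $\SubGraph i$ is one of the proper subgraphs $\SubGraph j$, the whole deviated profile still lies in $\SubGraph j$, both outcomes are computed by $\MECHi j$, and by Condition~\ref{enu:No intra-edges} distances inside $\SubGraph j$ agree with those of $\GR_j$; so the claim for this $\SubGraph i$ is exactly the inductive hypothesis applied to the \ZVlineGraphTEXT $\GR_j$ (with ambient subgraph all of $\GR_j$). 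This leaves $\SubGraph i=\GRvNEW$: assume $\LOCvec\in\GRvNEW^{n}$ and $\REPORTset$ arbitrary, and suppose, for contradiction, that all members of $C$ weakly prefer $o'$ over $o$ and that $j^{\star}\in C$ strictly does. Since $\TheMECH$ is Pareto optimal, $o\in\POof{\LOCvec}$; if in addition every agent outside $C$ weakly preferred $o'$ over $o$ at its true location, then $o'$ would Pareto-dominate $o$, a contradiction — so there is a ``lever'' $j_{0}\notin C$ with $\DISTof{\LOCof{j_{0}}}o<\DISTof{\LOCof{j_{0}}}{o'}$, and $j_{0}$ reports truthfully in both profiles.

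The rest rests on a preliminary \emph{interval lemma}: for every vertex $v$ and radius $r$, $\BallOf vr\cap\GRzVerts$ is an interval of $\GRzVerts$. This propagates Conditions~$\boldsymbol{\left(b\right)}$ and~$\boldsymbol{\left(e\right)}$ of Definition~\ref{def:ZV-line} to all radii and all vertices (for $v\in\SubGraph i$ one again pushes through $\RootOfSubGraph i$ using Condition~\ref{enu:No intra-edges}), and in particular makes every agent's preference restricted to $\GRzVerts$ single-peaked w.r.t.\ the left-to-right order. The principal sub-case is $o,o'\in\GRzVerts$, which also subsumes the base case $k=0$, where $\TheMECH$ simply returns the leftmost reported location. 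If $o'\prec o$, then $o'\notin\POof{\LOCvec}$ (else $o$ would not be $\prec$-first), so $o'$ is Pareto-dominated w.r.t.\ $\LOCvec$, say by $u$; since $o'\in\POof{\REPORTset,\LOCof{-C}}$ and only $C$'s ballots changed, some ballot of $\REPORTset$ must ``shield'' $o'$ from $u$, and by single-peakedness it sits strictly on the far side of $o'$ from the lever $j_{0}$, which (tracing the preferences of $C$) forces some member of $C$ to be strictly hurt by the move $o\rightarrow o'$, contradicting unanimity of $C$; the case $o\prec o'$ is dual. The remaining ``mode-change'' sub-cases — the true profile concentrated in some $\SubGraph j$ (so $o=\MECHiOf j{\LOCvec}\in\SubGraph j$, possibly not a $\Zvertex$) while the deviated profile is spread, or the reverse — reduce to the $Z$-vertex analysis: the Pareto-lifting facts force such an $o$ to be $\prec$-minimal among Pareto optima lying inside $\SubGraph j$, hence to equal $\RootOfSubGraph j$ as soon as it competes with $\Zvertices$, and then the interval lemma at $\RootOfSubGraph j$ (via Condition~$\boldsymbol{\left(e\right)}$) lets one treat $\RootOfSubGraph j$ exactly like a $\Zvertex$.

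I expect the main obstacle to be this last cluster of mode-change sub-cases, together with keeping straight the three preference relations in play — w.r.t.\ true locations, w.r.t.\ reported locations, and w.r.t.\ distances internal to a subgraph — and turning the interval conditions into the precise statement that $C$ cannot route the outcome ``around'' the lever $j_{0}$: any Pareto optimum of the deviated profile that all of $C$ weakly prefers to $o$ (with one strict) must, by single-peakedness, lie on the far side of $j_{0}$ from $o$, and such a location would already have been Pareto optimal, and $\prec$-smaller than $o$, in the true profile, contradicting the choice of $o$. The base case $k=0$ should instead be short once single-peakedness is in hand, essentially recovering the discrete-line argument: the leftmost reported location is a real agent's peak, so either that agent lies in $C$ and any change strictly hurts her, or she lies outside $C$ and the outcome can only be pushed further left, strictly hurting every member of $C$.
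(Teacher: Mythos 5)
Your skeleton matches the paper's: anonymity and Pareto optimality via the ``lifting'' through $\RootOfSubGraph i$, the interval lemma ($\BallOfCapZ vd$ is an interval), and the observation that Pareto optimality of $\TheMECHof{\LOCvec}$ forces a lever agent $j_{0}\notin C$ who strictly prefers the old outcome. But the two steps that actually close the argument are not correct as sketched. In the principal sub-case $o,o'\in\GRzVerts$, your ``shield'' reasoning is about \emph{reported} ballots in $\REPORTset$, which are fake and say nothing about the true locations of $C$'s members; the deduction that some member of $C$ is strictly hurt does not follow, and the claim that a location on the far side of $j_{0}$ ``would already have been Pareto optimal in the true profile'' is unsupported. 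The paper's route is different and tighter: it shows (Lemma~\ref{lem:LeftMostPO}) that whenever the outcome is in $\GRzVerts$ and some agent at true location $\LOCof i$ strictly prefers $v\in\GRzVerts$, the outcome lies strictly to the \emph{left} of $v$ --- the proof finds a Pareto-optimal $\Zvertex$ $u$ with $\DISTof{\LOCof i}u\leqslant\DISTof{\LOCof i}v$ (either $\LOCof i$ itself or a $\Zvertex$ on a shortest path to an agent in another subgraph) and combines $o\preceq u$ with $o\notin\BallOfCapZ{\LOCof i}{\DISTof{\LOCof i}v}$ and the interval property. Applying this once to $\LOCvec$ with the strictly-benefiting $i\in C$ and once to $\left(\REPORTset,\LOCvec_{-C}\right)$ with $j_{0}$ gives $o\prec o'$ and $o'\prec o$ simultaneously. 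You need this lemma (or an equivalent); your current text does not contain it.

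The second gap is the mode-change case where $\TheMECHof{\LOCvec}\notin\GRzVerts$. Your claim that the Pareto-lifting facts force $o$ to equal $\RootOfSubGraph j$ ``as soon as it competes with $\Zvertices$'' is false: $o=\MECHiOf j{\LOCvec}$ can be a non-root vertex deep inside $\SubGraph j$ while the deviated profile is spread. The paper resolves this with a hypothesis you never invoke: for arbitrarily large $\tau$ there is a profile in $\SubGraph j$ voting every location at least $\tau$ times whose outcome is $\RootOfSubGraph j$. Hence agent $i$ alone could flood $\MECHi j$ to obtain the root, so false-name-proofness of $\MECHi j$ gives that $i$ weakly prefers $o$ to $\RootOfSubGraph j$; since $i$ strictly prefers $o'$ to $o$ and every location outside $\SubGraph j$ is at least as far from $\LOCof i$ as the root, this forces $o'\in\SubGraph j\setminus\left\{ \RootOfSubGraph j\right\} $, hence $\REPORTset\subseteq\SubGraph j$, collapsing the whole case to manipulation-resistance of $\MECHi j$ (and symmetrically with $j_{0}$ when $o'\notin\GRzVerts$). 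Without the root-achievability property (or an inductive invariant that no agent in $\SubGraph j$ ever strictly prefers $\RootOfSubGraph j$ to $\MECHi j$'s output), your reduction of the mode-change cases does not go through.
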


{\newcommand{\TEXTwidth}{0.785}%
\edef\RememberIndent{\the\parindent}%
\noindent\begin{minipage}[t]{\TEXTwidth\linewidth}\setlength{\parindent}{\RememberIndent}%

Note  that the theorem does not hold for weighted graphs (that is,
when edges have non-uniform length). Consider the following weighted
graph and a profile in which Alice is located on $z_{r}$ and Bob
on $v$. Then, the outcome is $z_{r}$, but Bob can move the facility
to a preferred location $z_{\ell}$  both $\boldsymbol{\left(\rmnum 1\right)}$
by misreporting $z_{\ell}$, hence $\TheMECH$ is not strategy-proof,
and $\boldsymbol{\left(\rmnum 2\right)}$ by false-name-reporting
$z_{\ell}$ in addition to his sincere report, hence $\TheMECH$ is
not false-name-proof.

\end{minipage}\hfill\begin{minipage}[t][1\totalheight][c]{\dimexpr1\linewidth -\TEXTwidth\linewidth -0.02\linewidth}%

$\ExampleCounterExampleWeightedGraph[@R=0.8em]$

\end{minipage}} 

We note that there are trivial mechanisms which satisfy  subsets of
these properties: %
\textbullet{} The \emph{fixed mechanism}, which locates the facility on a pre-defined
location ignoring the votes of the agents, is trivially manipulation-resistant
and anonymous, but it is not onto and hence not Pareto optimal.  %
\textbullet{} A \emph{dictatorship}, e.g., the mechanism that always locates the
facility on the location reported by the first agent, is not anonymous
but clearly it is manipulation-resistant.\footnote{While we did not formally define false-name-proofness for non-anonymous
mechanisms, assuming a false-name vote cannot be counted as the vote
of the first agent, no agent can benefit from casting additional ballots.} %
\textbullet{}  The \emph{median mechanism}, which minimizes the sum of distances
between the facility and the ballots, is anonymous and Pareto optimal,
and it is not hard to see that for the discrete line it satisfies
strategy-proofness and abstention-proofness both against one manipulator
and against a coalition but an agent will benefit by casting multiple
identical ballots. %
\textbullet{} The \emph{mean mechanism}, which minimizes the sum of squares of the
distances between the facility and the ballots, is anonymous and Pareto
optimal but might not be strategy-proof or false-name-proof even against
one agent, e.g., for the discrete line graph (it is abstention-proof,
though).

\subsection{Implications: Mechanisms for recursive graph families\label{subsec:Implications:-Mechanisms-for}}

By applying the main result to a recursive graph family, we can generate
a recursive (and hence commonly simple) mechanism which satisfies
our desiderata. For instance, a corollary of our result is a manipulation-resistant
mechanism for the following  family of rooted graphs (that is, $\left\langle \GRvNEW,\GRe,r\right\rangle $
s.t. $\GRe\subseteq\binom{\GRvNEW}{2}$ and $r\in\GRvNEW$).
\begin{defn}[$\mathcal{F}$]
\MyLyxThmNewline{}
\begin{itemize}
\item $\left\langle \left\{ v\right\} ,\emptyset,v\right\rangle \in\mathcal{F}$.
\item For any $k,\ell\geqslant1$: If $\left\{ \left\langle \GRvNEW_{i},\GRe_{i},r_{i}\right\rangle \right\} _{i=1}^{k}$
are in $\mathcal{F}$ (and the $\GRvNEW_{i}$ are disjoint), then
also the following graph is in $\mathcal{F}$.\vspace{-.7em}
\[
\left(\left\{ \widehat{r}_{j}\right\} _{j=1}^{\ell}\DisjCup\left(\bigDisjCup_{i=1}^{k}\GRvNEW_{i}\right),\left\{ \left\langle \widehat{r}_{j},r_{i}\right\rangle \right\} _{i=1\ldots k,j=1\ldots\ell}\DisjCup\left(\bigDisjCup_{i=1}^{k}E_{i}\right),\widehat{r}_{1}\right)
\]
I.e., adding a new layer of pre-roots, a bi-clique between them and
the roots of the graphs of the previous stage, and defining one of
the pre-roots to be the new root.
\end{itemize}
\end{defn}

\begin{claim}
The anonymous Pareto optimal mechanism $\MECHof{\LOCvec}=\argmin_{v\in\PO\left(\LOCvec\right)}\allowbreak\DISTof vr$,
which returns the Pareto optimal location closest to the root and
breaks ties according to a predefined order, is manipulation-resistant.
\end{claim}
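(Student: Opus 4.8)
The plan is to prove that $\MECH$ is manipulation\hyp{}resistant by structural induction on the recursive definition of $\mathcal{F}$, using the \ZVlineGraphTEXT machinery behind Theorem~\ref{thm:MainThm-Ver2} at each step. The base case $\left\langle \{v\},\emptyset,v\right\rangle$ is trivial, since there $\MECH$ is a constant mechanism. For the inductive step let $G$ be built from $G_{1}=\left\langle \GRvNEW_{1},\GRe_{1},r_{1}\right\rangle ,\ldots,G_{k}=\left\langle \GRvNEW_{k},\GRe_{k},r_{k}\right\rangle \in\mathcal{F}$ by adjoining pre\hyp{}roots $\widehat{r}_{1},\ldots,\widehat{r}_{\ell}$, the bi\hyp{}clique joining the $\widehat{r}_{j}$ to the $r_{i}$, and declaring $r=\widehat{r}_{1}$ the root. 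First I would record the geometry I will reuse: each $G_{i}$ meets the rest of $G$ only at $r_{i}$, so $\DISTof{\cdot}{\cdot}$ restricted to $\GRvNEW_{i}$ coincides with the distance inside $G_{i}$ and $\DISTof{v}{r}=\DISTof{v}{r_{i}}+1$ for every $v\in\GRvNEW_{i}$; the root $r$ is at distance $1$ from every $r_{i}$ and at distance $2$ from every $\widehat{r}_{j}$ with $j\ge 2$; and every path out of $\GRvNEW_{i}$ passes through $r_{i}$ and then through $\{\widehat{r}_{1},\ldots,\widehat{r}_{\ell}\}$.

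Next I would check that $G$ is a \ZVlineGraphTEXT with $\GRzVerts=\{\widehat{r}_{1},\ldots,\widehat{r}_{\ell}\}$ (taking $r=\widehat{r}_{1}$ leftmost) and blocks $\GRvNEW_{1},\ldots,\GRvNEW_{k}$: the requirements of Definition~\ref{def:ZV-partition} are immediate from the construction, the induced graph on $\GRvNEW_{i}$ is a \ZVlineGraphTEXT with $r_{i}$ its leftmost $\Zvertex$ by the inductive hypothesis, and the interval conditions of Definition~\ref{def:ZV-line} hold because $\BallOfCapZ{\widehat{r}_{j}}{1}=\{\widehat{r}_{j}\}$ (the pre\hyp{}roots are pairwise non\hyp{}adjacent) and $\BallOfCapZ{r_{i}}{1}=\GRzVerts$ (each $r_{i}$ sees every pre\hyp{}root). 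This also pins down the Pareto\hyp{}optimal sets: when all of $\LOCvec$ lies in one $\GRvNEW_{i}$, $\POof{\LOCvec}$ is exactly the Pareto set of $\LOCvec$ computed inside $G_{i}$, and then, using $\DISTof{\cdot}{r}=\DISTof{\cdot}{r_{i}}+1$ on $\GRvNEW_{i}$, the outcome $\MECHof{\LOCvec}$ is precisely the outcome of the ``closest to $r_{i}$'' mechanism on $G_{i}$; whereas if $\LOCvec$ is non\hyp{}unanimous and not contained in a single $\GRvNEW_{i}$, the geometry above forces $\MECHof{\LOCvec}\in\{r,r_{1},\ldots,r_{k}\}$.

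The manipulation analysis then distinguishes whether the honest profile $\LOCvec$ and the deviated report profile $\left(\REPORTset,\LOCvec_{-C}\right)$ are each contained in a single block. If both lie in the same $\GRvNEW_{i}$, the distance identities make both outcomes equal to those of the ``closest to $r_{i}$'' mechanism on $G_{i}$, and the inductive hypothesis says $C$ cannot strictly gain. In the remaining cases at least one of the two profiles is non\hyp{}unanimous and straddles the blocks (or sits in a block different from the other), so its outcome lies in $\{r,r_{1},\ldots,r_{k}\}$; here I would rule out a beneficial deviation by a direct argument tracking which of $r$ and the $r_{i}$ are Pareto optimal under each report profile — an agent reporting $r$ keeps $r$ Pareto optimal; an agent reporting inside $\GRvNEW_{i}$ alongside an agent reporting outside it keeps $r_{i}$ Pareto optimal and makes every other $r_{i'}$ Pareto\hyp{}dominated; agents reporting only on pre\hyp{}roots keep all the $r_{i}$ Pareto optimal and $r$ not — and then rerunning, for the ``closest to $r$'' selection, the strategy\hyp{}, false\hyp{}name\hyp{} and abstention\hyp{}proofness bookkeeping behind Theorem~\ref{thm:MainThm-Ver2}. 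The main obstacle is exactly this last part: $\MECH$ is the ``closest Pareto optimal location to the root'' rule, not the ``leftmost Pareto optimal $\Zvertex$'' rule $\TheMECH$ of Definition~\ref{def:ZV-mechanism} — the old roots $r_{i}$ lie at distance $1$ from $r$ while the non\hyp{}root pre\hyp{}roots $\widehat{r}_{j}$ lie at distance $2$, so $\MECH$ will prefer an old root over a pre\hyp{}root — so this case does not fall out of Theorem~\ref{thm:MainThm-Ver2} verbatim and the deviation bookkeeping for the straddling profiles has to be carried out by hand.
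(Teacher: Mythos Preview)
Your inductive skeleton is exactly what the paper does, but the paper stops much earlier than you do: it verifies that $G$ is a \ZVlineGraphTEXT with $\GRzVerts=\{\widehat r_1,\ldots,\widehat r_\ell\}$ and $\SubGraph i=\GRvNEW_i$, invokes Theorem~\ref{thm:MainThm-Ver2}, and then simply asserts that ``our recursive mechanism returns one of the pre\hyp{}roots of the lowest subgraph which includes $\LOCvec$'' --- that is, the paper identifies the mechanism with $\TheMECH$ of Definition~\ref{def:ZV-mechanism}, not with the $\argmin$\hyp{}by\hyp{}distance rule in the Claim, and does not engage at all with the distinction you raise.

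Your observation that the two rules differ is correct and is a point the paper glosses over. For $\ell\ge 2$, take one agent at $\widehat r_2$ and one agent inside $\GRvNEW_1$: then $r_1$ Pareto dominates $\widehat r_1$, so $\TheMECH$ outputs the leftmost Pareto optimal $\Zvertex$, namely $\widehat r_2$, whereas the ``closest PO to the root'' rule outputs $r_1$ (distance $1$ versus $2$). So the $\argmin$ description in the Claim coincides with $\TheMECH$ only in the tree case $\ell=1$; for larger $\ell$ the paper's short proof literally establishes manipulation\hyp{}resistance of $\TheMECH$, and the $\argmin$ formula should be read as an informal gloss rather than a different mechanism.

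If you really want to prove the Claim for the $\argmin$ rule itself, your straddling\hyp{}profile case analysis is the right idea but is both incomplete (as you say) and slightly off in one place: when one agent is in $\GRvNEW_i$ and another in $\GRvNEW_{i'}$ with $i\ne i'$, the vertex $r_{i'}$ is \emph{not} Pareto dominated (agent $b$ in $\GRvNEW_{i'}$ strictly prefers $r_{i'}$ to $r_i$), and in fact $\widehat r_1$ is Pareto optimal there, so $\MECH$ returns $\widehat r_1$ and agrees with $\TheMECH$. The genuine divergence is only the ``pre\hyp{}root versus single block'' configuration above, and that is the case you would have to finish by hand. The cleaner route --- and the one the paper actually takes --- is to prove the statement for $\TheMECH$ and regard the distance\hyp{}to\hyp{}root formula as a description that is exact for trees and heuristic otherwise.
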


Note that by setting $\ell=1$ in the second step of the definition
we get a recursive definition of \emph{rooted trees}. Hence, we get
that for any tree $\GR$ the mechanism that returns the lowest common
ancestor of the ballots (with regard to some root) is a manipulation-resistant
mechanism (These are also the mechanisms which Todo et al.~\cite{DBLP:conf/atal/TodoIY11}
characterized as the false-name-proof, anonymous, and Pareto optimal
mechanisms for the continuous tree.). 
\begin{proof}
We prove the claim by induction over, $h\left(\GR\right)$, the number
of steps needed to generate $\GR$.

\uline{If \mbox{$h\left(\GR\right)=0$}}, i.e., $\GR=\left\langle \left\{ v\right\} ,\emptyset,v\right\rangle $
consists of a single vertex and the trivial mechanism satisfies all
the desired properties.

\uline{If \mbox{$h\left(\GR\right)\geqslant1$}}, then $\GR$
is a \ZVlineGraphTEXT w.r.t. $\GRzVerts=\left\{ \widehat{r}_{j}\right\} _{j=1}^{\ell}$
and $\SubGraph i=\GRvNEW_{i}$. Note that for all $\SubGraph i$-subgraphs
$h\left(\left\langle \GRvNEW_{i},\GRe_{i},r_{i}\right\rangle \right)\leqslant h\left(\GR\right)-1$.
Hence, our recursive mechanism returns one of the pre-roots of the
`lowest' subgraph which includes $\LOCvec$ when ties are broken according
to the (arbitrary) order over the pre-roots.
\end{proof}
A second example is \emph{connected block graphs~\cite{Harary1963}}.
A connected graph $\GR=\left\langle \GRvNEW,\GRe\right\rangle $ is
a block graph if the following equivalent conditions hold:
\begin{itemize}
\item Every biconnected component of $\GR$ is a clique. (Since for any
graph the structure of its biconnected components  is described by
a block-cut tree, connected block graphs are also called \emph{clique
trees}.)
\item The intersection of any two connected subgraphs of $\GR$ is either
empty or connected.
\item For every four vertices $u,v,w,x\in\GRvNEW$, the larger two of the
distance sums $d\left(u,v\right)+d\left(w,x\right)$, $d\left(u,w\right)+d\left(v,x\right)$,
and $d\left(u,x\right)+d\left(v,w\right)$  are equal.
\end{itemize}
Our mechanism for a connected block graph $\GR$ returns the closest
Pareto optimal location to an arbitrarily predefined location, breaking
ties according to an arbitrarily predefined order over the locations.

\begin{proof-sketch}

$\GR$ is connected block graph and hence all bi-connected components
of $\GR$ are cliques. The \emph{block-cut tree} of $\GR$ is a tree
$\mathcal{T}\left(\GR\right)$ which is defined in the following way.
In $\mathcal{T}\left(\GR\right)$ there is a vertex (\emph{component-vertex})
for each maximal biconnected component of $\GR$ and a vertex (\emph{intersection-vertex})
for each vertex in $\GR$ which belongs to more than one maximal biconnected
component. There is an edge in $\mathcal{T}\left(\GR\right)$ between
each component-vertex and the intersection-vertices belonging to this
component.

Following the inductive structure of $\mathcal{T}\left(\GR\right)$,
and recalling that a clique is a \ZVlineGraphTEXT w.r.t. all vertices
of the clique being $\Zvertices$ and any order over them, we get
that our mechanism is defined by an arbitrary predefined component-vertex
of $\mathcal{T}\left(\GR\right)$, $\RootSymbol$, and a series of
arbitrary predefined orders over the locations of each of the components.
The mechanism is:
\begin{itemize}
\item[$\blacktriangleright$] If all ballots belong to the same component, return the first location
(according to the order) that was voted for.
\item[$\blacktriangleright$] Otherwise, choose the component closest to $\RootSymbol$ s.t. one
of the locations of the component is Pareto optimal, and return the
first location (according to the order) in this component.
\end{itemize}
Last, we note that an equivalent definition of this mechanism is returning
the closest Pareto optimal location to some location $v\in\RootSymbol$,
breaking ties according to a concatenation of the orders over the
components.
\end{proof-sketch}%

\newpage{}

\section{Proof of Main Result (Thm.~\ref{thm:MainThm-Ver2})\label{sec:Proof-of-MainThm}}

We prove a stronger result which shows a general method for generating
a mechanism~$\TheMECH$ (satisfying the desired properties) for a
given graph from mechanisms for its subgraphs,~$\MECH_{i}$. Theorem~\ref{thm:MainThm-Ver2}
is an immediate special case of this lemma. The same proof shows that
also for weaker manipulation-resistance properties, e.g., against
individual agents, against misreporting, or against abstentions, manipulation-resistance
of the mechanisms for the subgraphs~$\MECH_{i}$, result in the same
manipulation-resistance notion for the mechanism of the graph~$\TheMECH$.
\begin{lem}
\label{thm:MainLemmaForMainThm}Let $\GR=\left(\GRvNEW,\GRe\right)$ be a graph with a \ZVopartTEXT
$\WRTaPartition$ and let $\MECH_{i}\colon\XtoTheStar{\left(\SubGraph i\right)}\rightarrow\SubGraph i$
be a sequence of mechanisms s.t. for $i=1,\ldots,k$
\begin{itemize}
\item $\MECHi i$ is anonymous and Pareto optimal;
\item For an infinite number of $\tau\in\N$, there exists a profile $\LOCvec\in\XtoTheStar{\left(\SubGraph i\right)}$
in which all locations in $\SubGraph i$ were voted for at least $\tau$
times and $\MECHiOf i{\LOCvec}=\RootOfSubGraph i$; and
\end{itemize}
\begin{itemize}
\item {\newcommand{\tikzmark}[1]{%
    \tikz[baseline={(#1.base)},remember picture]%
    \node[outer sep=0pt, inner sep=0pt] (#1) {\strut};}%
\newcommand{\TEXTwidth}{0.935}%
\begin{minipage}[t]{\TEXTwidth\linewidth}

For any vector of locations $\LOCvec\in\left(\SubGraph i\right)^{n}$,
 a coalition of agents $C$,  and a set of ballots $\REPORTset\in\XtoTheStar{\left(\SubGraph i\right)}$,\footnotemark{}
 $\REPORTset$ is not a beneficial deviation for $C$ (That is, $C$
does not strictly prefer $\MECHiOf i{\REPORTset,\LOCof{-C}}$ over
$\MECHiOf i{\LOCvec}$).

\end{minipage}\begin{minipage}[t]{\dimexpr1\linewidth -\TEXTwidth\linewidth -0.05\linewidth}%
\tikzmark{TOP}\\%
{}\\%
\tikzmark{BOT}%
\end{minipage}%
\tikz[remember picture,overlay,thick,
	decoration={calligraphic brace,
    	raise=2pt, amplitude=6pt},
	pen colour={cyan!80!black}]%
\draw[decorate] (TOP.north) -- (TOP |- BOT.south) node [black,midway,xshift=18pt]{$\ThePropSymbol$};\footnotetext{Since $\MECH_{i}$ (and later $\TheMECH$)
are anonymous mechanisms, we define $\REPORTset$ as a set of ballots
ignoring identities.}}
\end{itemize}

Then, for $\TheMECH\colon\VtoTheStar\rightarrow\GRvNEW$ as defined
in Definition~\ref{def:ZV-mechanism}, $\TheMECH$ is an anonymous
and Pareto optimal mechanism and
\begin{itemize}
\item[$\left(\boldsymbol{\Rmnum 1}\right)$] If $\GR$ is a \ZVlineGraphTEXT w.r.t. $\WRTaPartition$, then $\TheMECH$
satisfies $\ThePropSymbol$.
\item[$\left(\boldsymbol{\Rmnum 2}\right)$] If  $\RootOfSubGraph i\in\GRzVerts$ for all $i=1,\ldots,k$, and
the mechanism $\MECH_{\GRzVerts}\colon\XtoTheStar{\GRzVerts}\allowbreak\rightarrow\allowbreak\GRzVerts$
which returns the leftmost Pareto optimal location satisfies $\ThePropSymbol$,
then also $\TheMECH$ satisfies~$\ThePropSymbol$.
\end{itemize}
\end{lem}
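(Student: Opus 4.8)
The plan is to first record the one structural fact that drives everything, then dispatch anonymity and Pareto optimality, and finally establish $\ThePropSymbol$ by a case analysis on which branch of Definition~\ref{def:ZV-mechanism} produces $\TheMECHof\LOCvec$ and $\TheMECHof{\REPORTset,\LOCvec_{-C}}$. The key fact is Condition~\ref{enu:No intra-edges} of Definition~\ref{def:ZV-partition}: since every path leaving $\SubGraph i$ passes through $\RootOfSubGraph i$ and through $\GRzVerts$, the induced graph $\GR_i$ is isometrically embedded in $\GR$, and $\DISTof vu=\DISTof v{\RootOfSubGraph i}+\DISTof{\RootOfSubGraph i}u$ for every $v\in\SubGraph i$ and every $u\notin\SubGraph i$. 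Anonymity of $\TheMECH$ then follows from anonymity of the $\MECHi i$ together with the fact that both Pareto optimality and the leftmost choice depend only on the tally. For Pareto optimality: if all ballots lie in one $\SubGraph i$ then $\MECHiOf i\LOCvec$ is Pareto optimal in $\GR_i$ — which by the isometry is the same as being Pareto optimal in $\GR$ among the vertices of $\SubGraph i$ — and any dominator $u\notin\SubGraph i$ would, via the displayed identity, force $\RootOfSubGraph i$ to dominate $\MECHiOf i\LOCvec$ in $\GR_i$ (or directly contradict that $u$ dominates it), contradicting Pareto optimality of $\MECHi i$; in the other branch the output is by construction a Pareto optimal vertex of $\GRzVerts$.

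For $\ThePropSymbol$, fix a coalition $C$ and set $o=\TheMECHof\LOCvec$, $o'=\TheMECHof{\REPORTset,\LOCvec_{-C}}$; we must rule out that all of $C$ weakly prefer $o'$ while one strictly prefers it. If $\LOCvec$ and $\left(\REPORTset,\LOCvec_{-C}\right)$ are both concentrated in the same $\SubGraph i$ (in particular $\REPORTset\subseteq\SubGraph i$), then $o=\MECHiOf i\LOCvec$ and $o'=\MECHiOf i{\REPORTset,\LOCvec_{-C}}$, and since $\LOCvec\in\left(\SubGraph i\right)^{n}$ the isometry turns the supposed manipulation verbatim into a forbidden deviation for $\MECHi i$ — contradicting $\ThePropSymbol$ for $\MECHi i$.

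Next suppose $\LOCvec$ is concentrated in $\SubGraph i$ but $o'\notin\SubGraph i$ (so $o'\in\GRzVerts$ when the manipulated profile is split, or $o'\in\SubGraph j$ for some $j\neq i$). By the root-reachability hypothesis and anonymity of $\MECHi i$, choosing $\tau$ larger than $n$ gives $C$ an alternative deviation $\REPORTset'\subseteq\SubGraph i$ with $\MECHiOf i{\REPORTset',\LOCvec_{-C}}=\RootOfSubGraph i$; by $\ThePropSymbol$ for $\MECHi i$, $C$ does not strictly prefer $\RootOfSubGraph i$ over $o$, so some $j_0\in C$ satisfies $\DISTof{\LOCof{j_0}}o\leqslant\DISTof{\LOCof{j_0}}{\RootOfSubGraph i}$. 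Since $\LOCof{j_0}\in\SubGraph i$ and $o'\notin\SubGraph i$, the displayed identity gives $\DISTof{\LOCof{j_0}}{o'}=\DISTof{\LOCof{j_0}}{\RootOfSubGraph i}+\DISTof{\RootOfSubGraph i}{o'}\geqslant\DISTof{\LOCof{j_0}}o$, so $j_0$ does not strictly prefer $o'$; and if equality holds throughout then $o'=\RootOfSubGraph i$, in which case the same application of $\ThePropSymbol$ for $\MECHi i$ already says $C$ does not strictly prefer $o'$ over $o$. The mirror situation — $\LOCvec$ split (so $o\in\GRzVerts$) with the manipulated profile concentrated in $\SubGraph j$ (so $o'\in\SubGraph j$) — is settled by the same recipe: split-ness produces some $j^\ast\in C$ with $\LOCof{j^\ast}\notin\SubGraph j$, and the inequality furnishing the witness comes from applying $\ThePropSymbol$ of $\MECH_{\GRzVerts}$ to the coalition's projected deviation (everyone at $\RootOfSubGraph j$), using the projection-to-$\GRzVerts$ reduction developed next.

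The remaining, and hardest, case is when both profiles are split, so $o,o'\in\GRzVerts$ (that $\POof\LOCvec\cap\GRzVerts\neq\emptyset$ is the well-definedness remark after Definition~\ref{def:ZV-mechanism}). The plan is to \emph{project} each profile onto $\GRzVerts$, replacing a ballot at $v\in\SubGraph i\setminus\GRzVerts$ by $\BallOfCapZ{\RootOfSubGraph i}1$ — a single vertex when $\RootOfSubGraph i\in\GRzVerts$ (always so in part~$\left(\boldsymbol{\Rmnum 2}\right)$), and an interval of $\GRzVerts$ otherwise (part~$\left(\boldsymbol{\Rmnum 1}\right)$, by condition~$\boldsymbol{\left(e\right)}$ of Definition~\ref{def:ZV-line}). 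Because every path from such a $v$ into $\GRzVerts$ enters through $\RootOfSubGraph i$, each agent's preference restricted to comparisons between vertices of $\GRzVerts$ is that of a phantom agent located at its projected image; and, using Condition~\ref{enu:No intra-edges} together with $\RootOfSubGraph i\in\GRzVerts$ (resp.\ condition~$\boldsymbol{\left(e\right)}$), one checks that for a split profile a vertex of $\GRzVerts$ is Pareto optimal in $\GR$ iff it is Pareto optimal among the vertices of $\GRzVerts$ for the projected profile — a would-be dominator outside $\GRzVerts$ would collapse, exactly as in the previous cases, into a dominator inside $\GRzVerts$, which split-ness forbids. Hence $o$ and $o'$ are the leftmost Pareto optimal $\GRzVerts$-vertex for the projected honest and manipulated profiles respectively, and the latter is a legitimate projected deviation of $C$: in part~$\left(\boldsymbol{\Rmnum 2}\right)$ the projected profiles lie in $\GRzVerts^{n}$ and the conclusion is immediate from $\ThePropSymbol$ for $\MECH_{\GRzVerts}$; in part~$\left(\boldsymbol{\Rmnum 1}\right)$ the phantom agents may sit at intervals of $\GRzVerts$, and the conclusion reduces to the self-contained one-dimensional statement that the leftmost-Pareto-optimal rule on $\GRzVerts$ with the ambient metric and the given order is manipulation-resistant for single-peaked preferences with plateaus, using that $\GR$ being a line graph makes every ball $\BallOfCapZ z1$ an interval. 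Proving this last plateau claim is the genuinely delicate ingredient, and it is the step I expect to be the main obstacle.
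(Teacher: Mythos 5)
Your skeleton overlaps substantially with the paper's proof: the distance decomposition through $\RootOfSubGraph i$, the use of the root-reachability hypothesis plus $\ThePropSymbol$ to compare any outcome with $\RootOfSubGraph i$, and the projection of out-of-$\GRzVerts$ ballots onto roots for part $\left(\boldsymbol{\Rmnum 2}\right)$ are all exactly the paper's moves, and those cases of your argument are essentially sound (indeed, in your second case the inequality is strict because $o'\neq\RootOfSubGraph i$, so the witness $j_0$ strictly prefers $o$ and condition $\left(\rmnum 1\right)$ of a beneficial deviation already fails). The genuine gap is the case you flag yourself: both outcomes in $\GRzVerts$ under hypothesis $\left(\boldsymbol{\Rmnum 1}\right)$. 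Reducing it to ``the leftmost-Pareto-optimal rule on $\GRzVerts$ is manipulation-resistant for single-peaked preferences with plateaus'' is not only unproven but mis-framed: an agent's preference over $\GRzVerts$ is generally \emph{not} single-peaked in the left-to-right order (in the bi-clique all $\Zvertices$ are mutually at distance $2$ and the order is arbitrary), and replacing her by the interval $\BallOfCapZ{\RootOfSubGraph i}1$ discards the rest of her preference. What is actually needed, and what the paper proves by induction on $d$ from conditions $\boldsymbol{\left(b\right)}$ and $\boldsymbol{\left(e\right)}$ of Definition~\ref{def:ZV-line}, is that \emph{every} set $\BallOfCapZ vd$ is an interval of $\GRzVerts$. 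From this one deduces: if some agent strictly prefers a location $v\in\GRzVerts$ over $\TheMECHof{\LOCvec}\in\GRzVerts$, then $\TheMECHof{\LOCvec}$ lies strictly to the left of $v$ (the interval $\BallOfCapZ{\LOCof i}{\DISTof{\LOCof i}v}$ contains a Pareto optimal $\Zvertex$ weakly to the right of the output but excludes the output itself). Applying this once to the deviator $i\in C$ and once to an agent $j\notin C$ who strictly prefers the old outcome (such $j$ exists because $\TheMECHof{\LOCvec}\in\POof{\LOCvec}$) places each outcome strictly to the left of the other---an immediate contradiction, with no one-dimensional plateau lemma required.

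A secondary problem is your ``mirror'' case ($\LOCvec$ split, manipulated profile concentrated in $\SubGraph j$): the proposed recipe---apply $\ThePropSymbol$ of $\MECH_{\GRzVerts}$ to a projected deviation with everyone at $\RootOfSubGraph j$---does not typecheck under hypothesis $\left(\boldsymbol{\Rmnum 1}\right)$, where $\RootOfSubGraph j$ need not belong to $\GRzVerts$ (condition $\boldsymbol{\left(d\right)}$ only makes it a $\Zvertex$ of the sub-representation $\GR_j$) and no property of $\MECH_{\GRzVerts}$ is assumed at all. The paper resolves this symmetrically to your second case but \emph{from the other side}: the hurt agent $j\notin C$ sits inside $\SubGraph j$ together with the whole manipulated profile, can reach $\RootOfSubGraph j$ there by false ballots, hence weakly prefers the new outcome to $\RootOfSubGraph j$ and therefore strictly prefers the old outcome to $\RootOfSubGraph j$; the distance decomposition then forces the old outcome into $\SubGraph j\setminus\left\{ \RootOfSubGraph j\right\} $, collapsing this case into the already-handled concentrated one. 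You should replace the projection recipe by this argument.
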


\begin{proof}

The anonymity of $\TheMECH$ is an immediate corollary of the mechanisms
$\MECHi i$ and $\MECH_{\GRzVerts}$ being anonymous mechanisms.

Notice that if all agents are in the same $\SubGraph i$-subgraph,
then all of them strictly prefer $\RootOfSubGraph i$ over any location
outside of $\SubGraph i$, so $\POof{\LOCvec}\subseteq\SubGraph i$.
Moreover, any location $v\in\SubGraph i\setminus\POof{\LOCvec}$ is
Pareto dominated by a location $y\in\POof{\LOCvec}\subseteq\SubGraph i$.
Hence, the Pareto optimal set when considering only the locations
in $\SubGraph i$ equals to the Pareto optimal set when considering
all locations. Since, the mechanisms $\MECHi i$ are Pareto optimal
mechanisms we get that also $\TheMECH$ is Pareto optimal.

In order to prove the main part of the theorem, we assume towards
a contradiction that there exists a vector of locations $\LOCvec\in\GRvNEW^{n}$,
a coalition of agents $C$, and a set of ballots $\REPORTset\in\VtoTheStar$,
s.t. $C$ can, by voting $\REPORTset$, get an outcome $\TheMECHof{\REPORTset,\LOCvec_{-C}}$
which it strictly prefers, that is, all of its members weakly prefer
$\TheMECHof{\REPORTset,\LOCvec_{-C}}$ over $\TheMECHof{\LOCvec}=\TheMECHof{\LOCvec_{C},\LOCvec_{-C}}$,
and at least one of $C$'s members, Agent~$i$ for $i\in C$, strictly
prefers $\TheMECHof{\REPORTset,\LOCvec_{-C}}$ over $\TheMECHof{\LOCvec}$.
 $\TheMECHof{\LOCvec}\in\POof{\LOCvec}$ and in particular the coalition
of all agents does not strictly prefer $\TheMECHof{\REPORTset,\LOCvec_{-C}}$
over $\TheMECHof{\LOCvec}$. Hence, there exists an Agent~$j$, for
$j\notin C$, who strictly prefers $\TheMECHof{\LOCvec}$ over $\TheMECHof{\REPORTset,\LOCvec_{-C}}$.

\uline{If \mbox{$\TheMECHof{\LOCvec}$} is not in \mbox{$\GRzVerts$}}:
Then necessarily, all the locations in $\LOCvec$ and $\TheMECHof{\LOCvec}$
belong to the same $\SubGraph i$-subgraph, w.l.o.g. $\SubGraph 1$,
so $\TheMECHof{\LOCvec}=\MECHiOf 1{\LOCvec}$. Since $\MECHi 1$ is
resistant to false-name manipulations of Agent~$i$ and since Agent~$i$
can achieve $\RootOfSubGraph 1$ by casting enough false ballots,
we get that Agent~$i$ weakly prefers $\TheMECHof{\LOCvec}$ over
$\RootOfSubGraph 1$ and hence Agent~$i$ strictly prefers $\TheMECHof{\REPORTset,\LOCvec_{-C}}$
over $\RootOfSubGraph 1$. Since for any $u$ outside of $\SubGraph 1$
it holds that $\DISTof{\LOCof i}{\RootOfSubGraph 1}<\DISTof{\LOCof i}u$,
we get that $\TheMECHof{\REPORTset,\LOCvec_{-C}}\in\SubGraph 1\setminus\RootOfSubGraph 1\subseteq\SubGraph 1\setminus\GRzVerts$.
Hence, $\REPORTset\subseteq\SubGraph 1$ and $\TheMECHof{\REPORTset,\LOCvec_{-C}}=\MECHiOf 1{\REPORTset,\LOCvec_{-C}}$,
and we get a contradiction to the false-name-proofness of $\MECHi 1$.

Similarly, \uline{if \mbox{$\TheMECHof{\REPORTset,\LOCvec_{-C}}$}
is not in \mbox{$\GRzVerts$}}: Then necessarily, $\TheMECHof{\REPORTset,\LOCvec_{-C}}$
and all the locations in $\REPORTset$ and $\LOCvec_{-C}$ belong
to the same $\SubGraph i$-subgraph, w.l.o.g. $\SubGraph 1$, so $\TheMECHof{\REPORTset,\LOCvec_{-C}}=\MECHiOf 1{\REPORTset,\LOCvec_{-C}}$.
Since $\MECHi 1$ is resistant to false-name manipulations of Agent~$j$
and since Agent~$j$ can achieve $\RootOfSubGraph 1$ by casting
enough false ballots, we get that Agent~$j$ weakly prefers $\TheMECHof{\REPORTset,\LOCvec_{-C}}$
over $\RootOfSubGraph 1$ and strictly prefers $\TheMECHof{\LOCvec}$
over $\RootOfSubGraph 1$. Since for any $u$ outside of $\SubGraph 1$
it holds that $\DISTof{\LOCof j}{\RootOfSubGraph 1}<\DISTof{\LOCof j}u$,
we get that $\TheMECHof{\LOCvec}\in\SubGraph 1\setminus\RootOfSubGraph 1\subseteq\SubGraph 1\setminus\GRzVerts$.
Hence, $\LOCvec\subseteq\SubGraph 1$ and $\TheMECHof{\LOCvec}=\MECHiOf 1{\LOCvec}$,
and we get a contradiction to the false-name-proofness of $\MECHi 1$.

\uline{If both \mbox{$\TheMECHof{\LOCvec}$} and \mbox{$\TheMECHof{\REPORTset,\LOCvec_{-C}}$}
are in \mbox{$\GRzVerts$}}: We deal with this case using two different
argumentations for the two scenarios of the theorem.
\begin{adjustwidth}{.04\linewidth}{}

\noindent{}\uline{\mbox{$\left(\boldsymbol{\Rmnum 1}\right)$}
\mbox{$\GR$} is a \ZVlineGraphTEXT w.r.t. \mbox{$\WRTaPartition$}}:
We first prove the following two auxiliary lemmas.
\begin{innerLem}
\label{lem:BallIsIntervalOnZ} For any $v\in\GRvNEW$ and $d\geqslant0$,
$\BallOfCapZ vd$ is an interval in $\GRzVerts$.
\end{innerLem}

\begin{proof}
We prove the lemma by induction over~$d$.

\uline{For \mbox{$d=0$}}, $\BallOfCapZ v0$ equals to $\left\{ v\right\} $
if $v\in\GRzVerts$ and to the empty set if $v\notin\GRzVerts$. 

\uline{For \mbox{$d=1$}}, $\BallOfCapZ v1$ is either the empty
set or an interval in $\GRzVerts$.

\uline{For \mbox{$d\geqslant2$}}: If $d<\DISTof v{\GRzVerts}$,
$\BallOfCapZ vd=\emptyset$. If $d\geqslant\DISTof v{\GRzVerts}>1$
(in particular, $v\notin\GRzVerts$ and is not a root\footnote{An easy corollary of the definition of \ZVopartTEXT is that for all
$\SubGraph i$-subgraphs $\DISTof{\RootOfSubGraph i}{\GRzVerts}\leqslant1$.}), then there exists a location $u$ (the root of the $\ViSubGraph$
$v$ belongs to) s.t. all paths from $v$ to locations in $\GRzVerts$
pass through $u$, $1\leqslant\DISTof vu\leqslant\DISTof v{\GRzVerts}\leqslant d$
and 
\[
\BallOfCapZ vd=\BallOfCapZ u{d-\DISTof vu}
\]
 which is an interval by the induction hypothesis. 

Otherwise, $\DISTof v{\GRzVerts}\leqslant1<d$ and in particular $\BallOfCapZ vd\neq\emptyset$,
and hence
\[
\BallOfCapZ vd=\left(\BallOfCapZ v1\right)\cup\left(\bigcup_{\substack{u\in\NeighOf v\st\\
\DISTof u{\GRzVerts}\leqslant1
}
}\BallOfCapZ u{d-1}\right)\text{.}
\]

For any $u\in\NeighOf v$ s.t. $\DISTof u{\GRzVerts}\leqslant1$ we
claim that $\BallOfCapZ u{d-1}$ and $\BallOfCapZ v1$ intersect.
\begin{itemize}
\item If $u\in\GRzVerts$: $u\in\left(\BallOfCapZ u{d-1}\right)\cap\left(\BallOfCapZ v1\right)$.
\item If $u\notin\GRzVerts$: then $v\in Z$ and $v\in\left(\BallOfCapZ u{d-1}\right)\cap\left(\BallOfCapZ v1\right)$. 
\end{itemize}
Hence, for any $u\in\NeighOf v$ s.t. $\DISTof u{\GRzVerts}\leqslant1$,
$\BallOfCapZ u{d-1}$ and $\BallOfCapZ v1$ are intersecting intervals
in $\GRzVerts$. So $\BallOfCapZ vd$ is an interval as the union
of intersecting intervals. 
\end{proof}
\begin{innerLem}
\label{lem:LeftMostPO} Let $\LOCvec$ be a vector of locations s.t.
$\TheMECHof{\LOCvec}\in\GRzVerts$ and let $v\in\GRzVerts$ be a location
s.t. Agent~$i$ strictly prefers $v$ over $\TheMECHof{\LOCvec}$.
Then $\TheMECHof{\LOCvec}$ is to the left of $v$. 
\end{innerLem}

\begin{proof}
If $\LOCof i\in\GRzVerts$ then $\LOCof i\in\POof{\LOCvec}\cap\GRzVerts$
and by the definition of $\TheMECH$, $\TheMECHof{\LOCvec}$ is to
the left of $\LOCof i$. Since $\TheMECHof{\LOCvec}\notin\BallOfCapZ{\LOCof i}{\DISTof{\LOCof i}v}$
and since this set is an interval which includes $\LOCof i$, we get
that $\TheMECHof{\LOCvec}$ is to the left of the interval and in
particular to the left of $v$.

Otherwise, $\LOCof i\notin\GRzVerts$ and there exists an Agent~$k$
for which $\LOCof k$ is not in the same $\SubGraph i$-subgraph as
$\LOCof i$. Hence, there exists a location $u\in\GRzVerts$   s.t.
$u$ is on a shortest-path from $\LOCof i$ to $\LOCof k$, $u\in\GRzVerts$,
and $u\in\POof{\LOCvec}$. Hence, $\DISTof{\LOCof i}u\leqslant\DISTof{\LOCof i}v$
and so
\[
u\in\BallOfCapZ{\LOCof i}{\DISTof{\LOCof i}u}\subseteq\BallOfCapZ{\LOCof i}{\DISTof{\LOCof i}v}\text{.}
\]
The two sets are intervals in $\GRzVerts$, $\TheMECHof{\LOCvec}$
is to the left of $u$ (or equal to it), and $\TheMECHof{\LOCvec}\notin\BallOfCapZ{\LOCof i}{\DISTof{\LOCof i}v}\text{.}$Hence,
$\TheMECHof{\LOCvec}$ is to the left of $v$.
\end{proof}
By applying Lemma~\ref{lem:LeftMostPO} for the profile $\LOCvec$
and Agent~$i$, we get that $\TheMECHof{\LOCvec}$ is to the left
of $\TheMECHof{\REPORTset,\LOCvec_{-C}}$; and by applying Lemma~\ref{lem:LeftMostPO}
for the profile $\left(\REPORTset,\LOCvec_{-C}\right)$ and Agent~$j$,
we get that $\TheMECHof{\REPORTset,\LOCvec_{-C}}$ is to the left
of $\TheMECHof{\LOCvec}$. Hence, we get a contradiction.

~

\noindent{}\uline{\mbox{$\left(\boldsymbol{\Rmnum 2}\right)$}
\mbox{$\RootOfSubGraph i\in\GRzVerts$} and \mbox{$\MECH_{\GRzVerts}$}
satisfies \mbox{$\ThePropSymbol$}}: We notice that since $\RootOfSubGraph i\in\GRzVerts$
for all $\SubGraph i$-subgraphs the preference of an agent which
is located in a $\SubGraph i$-subgraph over the locations in $\GRzVerts$
and an agent which is located on the root, $\RootOfSubGraph i$, are
identical. Hence, for any profile $\LOCvecGen y$ if $\TheMECHof{\LOCvecGen y}\in\GRzVerts$
then $\TheMECHof{\LOCvecGen y}=\MECH_{\GRzVerts}\left(\widehat{\LOCvecGen y}\right)$
for $\widehat{\LOCvecGen y}$ being the profile generated from $\LOCvecGen y$
by replacing each ballot outside of $\GRzVerts$ with the root of
its $\SubGraph i$-subgraph. Therefore, for the profile $\widehat{\LOCvec}\in\GRzVerts^{n}$
the coalition $C$ can, by voting $\widehat{\REPORTset}$, get an
outcome $\MECH_{\GRzVerts}\left(\widehat{\REPORTset},\widehat{\LOCvec}_{-C}\right)$
which it strictly prefers over $\MECH_{\GRzVerts}\left(\widehat{\LOCvec}\right)$,
in contradiction to $\MECH_{\GRzVerts}$ satisfying~$\ThePropSymbol$.\hfill\qedsymbol{}\end{adjustwidth}
\end{proof}%


\section{Summary \& Future Work}

In this work, we presented a new family of graphs, \ZVlineGraphsTEXT,
 and a generic anonymous Pareto optimal manipulation-resistant mechanism
for the facility location problem on these graphs.  To the best of
our knowledge, the (very few) false-name-proof mechanisms which are
currently known are for specific graphs and this work is the first
to show a generic false-name-proof mechanism for a large family, utilizing
a  broad graph property and unifying all existence results which
we are aware of.  The construction of the mechanism is  inductive:
We  derive a mechanism for a given \ZVlineGraphTEXT from mechanisms
for its subgraphs. Hence, it is straightforward to derive from our
construction general mechanisms for recursive graph families.

Two technical assumptions we had are connectivity of the graph and
finiteness of the number of agents and locations. Our results can be extended to the case of an infinite number of agents
and locations under common natural constraints like finite diameter
of the graph, measurability of $\NeighOf v$ and of coalitions, and
the order over $\Zvertices$ being a well-order.   It is also not hard to see that the following extension for graphs
in which the connected components are \ZVlineGraphsTEXT will satisfy
the same desiderata.
\begin{itemize}
\item[$\blacktriangleright$] At the first stage, choose the first connected component according
to some predefined order s.t. at least one agent voted for a location
in this component.
\item[$\blacktriangleright$] At the second stage, run our mechanism taking into account only agents
who voted for locations in the chosen component.
\end{itemize}

The mechanism we presented is not the only mechanism satisfying the
desired properties. Taking any other order over the $\Zvertices$
s.t. the constraints of Def.~\ref{def:ZV-line} hold and defining
$\TheMECH$ accordingly will also satisfy them. In particular, a mechanism
which takes at the second stage of Def.~\ref{def:ZV-mechanism} the
rightmost Pareto optimal $\Zvertex$ will also satisfy the same desiderata.
 We did not find any mechanism satisfying the desiderata which is
not of this template, and we conjecture that  these are the only
anonymous Pareto optimal manipulation-resistant mechanisms for facility
location on a graph. 

\begin{conjecture}

Let $\GR=\left(\GRvNEW,\GRe\right)$ be a \ZVlineGraphTEXT w.r.t.
$\WRTaPartition$ and let $\MECH\colon\VtoTheStar\rightarrow\GRvNEW$
be a mechanism s.t.
\begin{itemize}
\item $\MECH$ is anonymous and Pareto optimal; and
\item For any vector of locations $\LOCvec\in\GRvNEW^{n}$,  a coalition
of agents $C$,  and a set of ballots $\REPORTset\in\VtoTheStar$,
 $\REPORTset$ is not a beneficial deviation for $C$.
\end{itemize}
Then, for $i=1,\ldots,k$: Whenever $\LOCvec\in\left(\SubGraph i\right)^{n}$,
 also $\MECHof{\LOCvec}\in\SubGraph i$. Moreover,  $\MECH$ is
the outcome of applying Def.~\ref{def:ZV-mechanism} for some order
over $\GRzVerts$ which satisfies the constraints of Def.~\ref{def:ZV-line}
and mechanisms $\MECHi i$ which are defined by $\LOCvec\in\left(\SubGraph i\right)^{n}\mapsto\MECHof{\LOCvec}$.
\end{conjecture}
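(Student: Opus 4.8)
The plan is to prove the conjecture by induction on $\left|\GRvNEW\right|$, showing simultaneously that every anonymous, Pareto optimal, group-manipulation-resistant mechanism $\MECH$ on a \ZVlineGraphTEXT is exactly the mechanism produced by Definition~\ref{def:ZV-mechanism} from the induced mechanisms $\MECHi i:=\left(\LOCvec\in\XtoTheStar{\left(\SubGraph i\right)}\mapsto\MECHof{\LOCvec}\right)$ together with a suitably chosen order on $\GRzVerts$; the base case $\left|\GRvNEW\right|=1$ is trivial. First I would dispatch the first assertion, $\LOCvec\in\left(\SubGraph i\right)^{n}\Rightarrow\MECHof{\LOCvec}\in\SubGraph i$: this follows from Pareto optimality alone, exactly as the well-definedness of $\TheMECH$ was argued below Definition~\ref{def:ZV-mechanism}. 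Indeed, if all ballots lie in $\SubGraph i$ then, by Definition~\ref{def:ZV-partition}, every path from a vertex of $\SubGraph i$ to a vertex $w\notin\SubGraph i$ runs through $\RootOfSubGraph i$, so every agent strictly prefers $\RootOfSubGraph i$ over $w$ and $w\notin\POof{\LOCvec}$. Hence $\MECHi i$ is a well-defined mechanism on the \ZVlineGraphTEXT $\GR_{i}$, and — since distances inside $\SubGraph i$ agree with those in $\GR$ and, by the same domination observation, the Pareto optimal set of a profile in $\left(\SubGraph i\right)^{n}$ is the same computed in $\GR_{i}$ or in $\GR$ — $\MECHi i$ inherits anonymity, Pareto optimality, and group-manipulation-resistance on $\GR_{i}$. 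By the induction hypothesis each $\MECHi i$ is itself the canonical mechanism of $\GR_{i}$; in particular its output can be forced to $\RootOfSubGraph i$ even after an arbitrary bounded set of extra ballots from $\SubGraph i$ is present, by additionally voting every location of $\SubGraph i$ many times — this keeps the profile non-confined inside $\GR_{i}$ and keeps $\RootOfSubGraph i$, which is the leftmost $\Zvertex$ of $\GR_{i}$ by condition~$\boldsymbol{\left(d\right)}$ of Definition~\ref{def:ZV-line}, Pareto optimal.

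Next I would prove that whenever $\LOCvec$ is \emph{not} confined to a single $\SubGraph i$, then $\MECHof{\LOCvec}\in\GRzVerts$; this is the converse of the ``$\TheMECH$ is well defined'' observation below Definition~\ref{def:ZV-mechanism}, and its proof mirrors, read in reverse, the two ``outcome not in $\GRzVerts$'' cases in the proof of Lemma~\ref{thm:MainLemmaForMainThm}. Assume otherwise, so $\MECHof{\LOCvec}=w\in\SubGraph j\setminus\GRzVerts$ for a unique $j$, and let $C$ be the coalition of all agents whose ballot is outside $\SubGraph j$, which is non-empty since $\LOCvec$ is non-confined. If $w\neq\RootOfSubGraph j$, then by Definition~\ref{def:ZV-partition} every path from a vertex outside $\SubGraph j$ to $w$ passes through $\RootOfSubGraph j$, so every member of $C$ strictly prefers $\RootOfSubGraph j$ over $w$; hence $C$ can replace its ballots by ``every location of $\SubGraph j$ voted many times'', after which the profile is confined to $\SubGraph j$ and, by the previous paragraph, its outcome is $\RootOfSubGraph j$ — a strict improvement for $C$, contradicting manipulation-resistance. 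The remaining degenerate case $w=\RootOfSubGraph j$, which forces $\RootOfSubGraph j\notin\GRzVerts$ and hence $\DISTof{\RootOfSubGraph j}{\GRzVerts}=1$, is the delicate one: here the last vertex before $w$ on any shortest path entering $\SubGraph j$ is a neighbour of $w$ in $\GRzVerts$, and combining this with condition~$\boldsymbol{\left(e\right)}$ (that $\BallOfCapZ{\RootOfSubGraph j}{1}$ is a non-empty interval) one exhibits either a Pareto domination of $w$ or a beneficial coalitional deviation towards that interval.

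Given these two steps, on every non-confined $\LOCvec$ we have $\MECHof{\LOCvec}\in\POof{\LOCvec}\cap\GRzVerts$, and the ball-is-an-interval machinery of Lemma~\ref{lem:BallIsIntervalOnZ} shows that $\POof{\LOCvec}\cap\GRzVerts$ is a non-empty interval of $\GRzVerts$. I would then recover the order: declare $z\prec z'$ when, for every non-confined profile whose Pareto optimal $\Zvertices$ form an interval containing both $z$ and $z'$, the outcome lies weakly on the $z$-side of that interval (two-agent profiles $\left(z,z'\right)$ serving as the basic witnesses), and show — using abstention-proofness and false-name-proofness to transport the outcome between overlapping such intervals, and misreports to exclude strictly interior outcomes — that $\prec$ is well defined, irreflexive and transitive, hence a total order, and that it makes every $\BallOfCapZ z1$ and every $\BallOfCapZ{\RootOfSubGraph i}1$ an interval, so that it is admissible for Definition~\ref{def:ZV-line}. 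By construction $\MECHof{\LOCvec}=\min_{\prec}\left(\POof{\LOCvec}\cap\GRzVerts\right)$ on all non-confined profiles, while on confined ones $\MECHof{\LOCvec}=\MECHiOf i{\LOCvec}$ by the definition of $\MECHi i$; hence $\MECH$ coincides with the mechanism of Definition~\ref{def:ZV-mechanism} instantiated with the parameters $\left(\prec,\left\{ \MECHi i\right\} \right)$, which is precisely what the conjecture asserts. (When $k=0$ this last paragraph is the entire argument.)

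The two hard points are the degenerate case in the escape step and, above all, transitivity of $\prec$ — the familiar stumbling block when characterizing priority-type choice rules — which here must be squeezed out of abstention- and false-name-proofness on non-confined profiles, exploiting the interval structure of $\POof{\LOCvec}\cap\GRzVerts$ to move the outcome consistently between overlapping intervals. Once transitivity and admissibility of $\prec$ are established the remainder is bookkeeping.
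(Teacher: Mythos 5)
This statement is posed in the paper as a \emph{conjecture}; the paper offers no proof of it, so your proposal has to stand entirely on its own, and as written it is a plan with genuine gaps rather than a proof. The most concrete error is the claim that for a non-confined profile $\POof{\LOCvec}\cap\GRzVerts$ is an interval. Take the bi-clique with $\Zvertices$ $z_{1},\ldots,z_{5}$ in the given left-to-right order and two agents located at $z_{1}$ and $z_{3}$ (a non-confined profile, since the $\ViSubGraphs$ are the singletons on the other side): every other $\Zvertex$ is at distance $2$ from both agents and is Pareto dominated by $z_{1}$, so $\POof{\LOCvec}\cap\GRzVerts=\left\{ z_{1},z_{3}\right\} $, which is not an interval. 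Lemma~\ref{lem:BallIsIntervalOnZ} gives intervality of \emph{balls} intersected with $\GRzVerts$, not of the Pareto set, and your recovery of the order $\prec$ (``the outcome lies weakly on the $z$-side of that interval'') is built on this false premise; the paper's own forward argument deliberately avoids this by working with Lemma~\ref{lem:LeftMostPO} instead.

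Second, your escape step is circular where it is not merely deferred. To rule out a non-confined profile with outcome $w\in\SubGraph j\setminus\GRzVerts$, you have the outside coalition flood $\SubGraph j$ with ballots and invoke the induction hypothesis to conclude that the flooded outcome is $\RootOfSubGraph j$. But the induction hypothesis only delivers that $\MECHi j$ is the canonical mechanism of $\GR_{j}$ with respect to \emph{some} admissible order on the $\Zvertices$ of $\GR_{j}$; nothing forces that recovered order to place $\RootOfSubGraph j$ leftmost --- condition $\boldsymbol{\left(d\right)}$ of Definition~\ref{def:ZV-line} is a property of the representation you were handed, not of every admissible representation the induction might return. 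The flooded outcome is therefore some fixed $z_{j}^{\star}$ that the coalition members (whose preferences over $\SubGraph j$ are governed solely by distance from $\RootOfSubGraph j$) need not prefer to $w$, and the contradiction evaporates. Finally, the two points you yourself flag --- the degenerate case $w=\RootOfSubGraph j\notin\GRzVerts$ and the transitivity and admissibility of $\prec$ --- are precisely where the difficulty of this open conjecture lives; announcing that they ``must be squeezed out'' of abstention- and false-name-proofness does not discharge them. What is sound is the first assertion: confinement of the outcome to $\SubGraph i$ does follow from Pareto optimality alone, exactly as in the paper's remark after Definition~\ref{def:ZV-mechanism}. The rest remains open.
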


Furthermore, unifying non-existence results for specific graphs we've
found  so far, we think that the partition to $\Zvertices$ and $\Vvertices$
is  a fundamental property of a false-name-proof mechanism. Consequentially,
showing that a given graph does not have such structure could be an
easy and efficient way to prove non-existence of a desired mechanism.
\begin{conjecture}
\label{conj:WithTheAlmost}

For almost any graph $\GR=\left\langle \GRvNEW,\GRe\right\rangle $,
if there exists an anonymous and Pareto optimal mechanism $\MECH\colon\VtoTheStar\rightarrow\GRvNEW$
s.t. 
\begin{itemize}
\item[] For any vector of locations $\LOCvec\in\GRvNEW^{n}$,  a coalition
of agents $C$,  and a set of ballots $\REPORTset\in\VtoTheStar$,
 $\REPORTset$ is not a beneficial deviation for $C$.
\end{itemize}
then there exists a sequence of non-empty sets of vertices $\GRzVerts,\SubGraph 1,\ldots,\SubGraph k\allowbreak\subseteq\allowbreak\GRvNEW$
s.t. $\GR$ is a \ZVlineGraphTEXT w.r.t. $\WRTaPartition$.
\end{conjecture}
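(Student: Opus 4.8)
The plan is to reverse-engineer a \ZVopartTEXT out of a hypothetical anonymous, Pareto optimal, manipulation-resistant mechanism $\MECH$ on $\GR$ and then verify that it satisfies Definitions~\ref{def:ZV-partition} and~\ref{def:ZV-line}; combined with the sufficiency direction of Theorem~\ref{thm:MainThm-Ver2} this would give a structural characterization of the graphs admitting such a mechanism. The first step is a \emph{locality lemma}, essentially the converse reading of the argument in the proof of Lemma~\ref{thm:MainLemmaForMainThm}: whenever $r$ is a cut vertex of $\GR$ and $K$ is a connected component of $\GR-r$, every profile $\LOCvec$ all of whose ballots lie in $K$ must satisfy $\MECHof{\LOCvec}\in K\cup\{r\}$, since an agent deep inside $K$ strictly prefers $r$ to every vertex outside $K\cup\{r\}$ and, by false-name-proofness and anonymity, can force the outcome to $r$ by flooding the mechanism with copies of her ballot, so Pareto optimality keeps the outcome inside $K\cup\{r\}$. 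Iterating this over the block-cut structure of $\GR$ isolates the candidate pendant pieces $\SubGraph i$ together with their roots $\RootOfSubGraph i$, on each of which $\MECH$ restricts to an anonymous Pareto optimal manipulation-resistant mechanism, so one proceeds by induction; the base case is the $2$-connected case, where there are no pendant pieces and one must show the whole graph is a \ZVlineGraphTEXT with $\GRzVerts=\GRvNEW$ (or that no such mechanism exists there at all).

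The second step constructs the order on $\GRzVerts$, equivalently on the vertices of the $2$-connected core obtained by contracting each pendant piece to its root. For two vertices $z,z'$ of that core, declare $z\prec z'$ when, in the presence of auxiliary ballots forcing the outcome into the core, $\MECH$ breaks the tie toward $z$; one must then show that $\prec$ is a strict total order and that $\MECHof{\LOCvec}$ is always the $\prec$-least Pareto optimal vertex of the core whenever the outcome lands there. The engine is the contrapositive of Lemma~\ref{lem:LeftMostPO}: any failure of ``leftmost Pareto optimal'' or of transitivity produces two profiles realizing a ``swap'' --- one agent pushing the outcome one way, another pushing it back --- which is exactly the configuration used for the final contradiction in the proof of Lemma~\ref{thm:MainLemmaForMainThm}. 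With $\prec$ in hand one verifies the interval constraints of Definition~\ref{def:ZV-line}, i.e.\ that $\BallOfCapZ{z}{1}$ and $\BallOfCapZ{\RootOfSubGraph i}{1}$ are $\prec$-intervals. This is the converse of Lemma~\ref{lem:BallIsIntervalOnZ}: if some one-ball fails to be an interval one gets vertices $z_\ell\prec z\prec z_r$ of $\GRzVerts$, all at distance at most $1$ from a common vertex $w$ while $z$ itself is not, and from such a ``gap'' one should be able to build a bounded-size profile --- a discrete echo of the $\Cycle 5$ obstruction that already forbids a mechanism on long cycles --- on which no anonymous Pareto optimal outcome is manipulation-resistant, contradicting the existence of $\MECH$.

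The main obstacle is the \emph{gluing} of these local orders into a single global one, and, inseparable from it, the ``almost any graph'' hedge. Each cut vertex yields, a priori, its own local order, and forcing all of them to be simultaneously compatible with \emph{every} one-ball $\BallOfCapZ{z}{1}$ is a global consistency requirement that the purely local manipulations above do not obviously enforce; it is plausible that on a short list of exceptional, highly symmetric graphs either a manipulation-resistant mechanism realizes genuinely incompatible local orders, or a mechanism exists whose internal bookkeeping is not of the contraction/recursion form at all, for which these tools --- which presuppose a cut structure --- give no leverage. I therefore expect the bulk of the work to be (i) pinning down exactly which graphs must be placed in the exceptional set, and (ii) on every graph outside it, ruling out mechanisms not captured by any \ZVopartTEXT; a reasonable order of attack is to settle the $2$-connected case with no exceptions first, then lift to arbitrary graphs through the block-cut tree, and finally absorb the residual pathologies into the ``almost any'' clause.
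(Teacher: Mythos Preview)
The statement you are attempting to prove is Conjecture~\ref{conj:WithTheAlmost}, and the paper does \emph{not} prove it: it appears in the ``Summary \& Future Work'' section as an open problem, accompanied only by the remark that the $5$-cycle (and graphs derived from it) is a counterexample forcing the ``almost any'' qualifier. There is therefore no proof in the paper to compare your proposal against.

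What you have written is not a proof but a research outline, and you seem aware of this: you explicitly flag the gluing of local orders and the identification of the exceptional set as ``the main obstacle'' and say you ``expect the bulk of the work'' to lie there. That assessment is honest. Your locality lemma and the block--cut decomposition are reasonable first moves, and the idea of reading off an order on $\GRzVerts$ from the mechanism's tie-breaking behaviour is natural. But the hard content of the conjecture is precisely the parts you defer: (i) showing that in the $2$-connected case the only graphs admitting such a mechanism are those on which \emph{some} order makes every $\BallOfCapZ{z}{1}$ an interval, and (ii) making ``almost any'' precise. Neither step is carried out, and the paper gives no indication that the authors know how to do them either --- indeed, the $\Cycle 5$ example shows that the $2$-connected base case already fails without exceptions, so your suggestion to ``settle the $2$-connected case with no exceptions first'' cannot work as stated. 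In short: this is a plausible plan of attack on an open problem, not a proof, and it should not be presented as one.
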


The only counter example we've found to the conjecture is the cycle
of size $5$  $\FiveCycleInner{c}{@R=0pt@C=1em}{3em}{!}{\VERTEX}{\VERTEX}{\VERTEX}{\VERTEX}{\VERTEX}$
(and graphs derived from it, e.g.,  $\FiveCycleDecorated{c}{@R=0pt@C=.6em}{5em}{!}$).
It is not hard to verify that \textbf{\textbullet} a mechanism which
returns the first Pareto optimal location according to one of the
following orders -  \global\long\def\FiveCycleOrder#1#2#3#4#5{\FiveCycleInner{c}{@R=0pt@C=1em}{4em}{!}{#1}{#2}{#3}{#4}{#5}}
 $\FiveCycleOrder 12453$, $\FiveCycleOrder 12543$, $\FiveCycleOrder 12534$,
or their rotations and reflections - is a manipulation-resistant mechanism
and that \textbf{\textbullet} while all these mechanisms are of the
template of Def.~\ref{def:ZV-mechanism} (for all vertices being
$\Zvertices$), these representations do not satisfy the connectivity
constraints of Def.~\ref{def:ZV-line} and the cycle of size $5$
is not a \ZVlineGraphTEXT. We conjecture that this is a representative
extreme exception and intend to characterize the exception and replace
 `almost' in Conjecture~\ref{conj:WithTheAlmost} with an exact
statement.

Last, an  important continuation of this work is analyzing the implications
for \emph{approximate mechanism design without money}~\cite{DBLP:journals/teco/ProcacciaT13}.
That is, assuming the agents are accurately represented by a cost
function (e.g., the distance to the facility or a monotone function
of the distance) and analyzing implications of manipulation-resistance
on the approximability of the minimization problem of natural social
cost functions, e.g., the average cost (Harsanyi's social welfare),
the geometric mean of the costs (Nash's social welfare), or the maximal
cost (Rawls' criterion). For instance, assuming the two conjectures
above, one gets that whenever there is a large disagreement in the
population (i.e., the agents are dispersed over many $\SubGraph i$-subgraphs)
an extreme status-quo alternative must be chosen by the mechanism,
which results in a bad \emph{price of false-name-proofness}. Nowadays,
many aggregation mechanisms are highly susceptible  to double voting
and to false-name manipulations in general (e.g., mechanisms over
huge anonymous networks like the internet, but also other scenarios
in which vote frauds are known to be easy). We think that such results
should open a discussion on the costs of these protocols (since the
benefits are clear).%

\newpage{}

\bibliographystyle{plain}
\providecommand{\noopsort}[1]{}

\end{document}